\newcounter{bean}
\theoremstyle{definition}
\theoremstyle{plain}
\newtheorem{theorem}{Theorem}
\newtheorem{lemma}{Lemma}
\newtheorem{ass}{Assumption}
\theoremstyle{remark}
\newtheorem{remark}{Remark}
\newcommand{\bY}{\mathbf{Y}}
\newcommand{\given}{\;\middle|\;}
\newcommand{\bX}{\mathbf{X}}
\newcommand{\mL}{\mathcal{L}}
\newcommand{\br}{\mathbf{r}}
\newcommand{\ind}{\mathds{1}} 
\renewcommand{\tilde }{\widetilde}
\renewcommand{\thesection}{\Roman{section}}
\title{Optimal Data Collection for Randomized\\ Control Trials}
\author{\textit{By} \textsc{Pedro Carneiro, Sokbae Lee, and Daniel Wilhelm}\thanks{Carneiro:
University College London, Institute for Fiscal Studies (IFS), and Centre for Microdata Methods
and Practice (CeMMAP); Lee: Columbia University, IFS and CeMMAP;
Wilhelm: University College London and CeMMAP. We thank Frank Diebold,
Kirill Evdokimov, Michal Kolesar, David McKenzie, Ulrich M\"uller, Imran Rasul, and participants at various seminars for helpful discussions.
An early version of this paper was presented at Columbia University and  Princeton University in September 2014,
and at New York University and University of Pennsylvania in December 2014.
This work was supported in part by the European Research Council (ERC-2014-CoG-646917-ROMIA)
and by  the UK Economic and Social Research Council (ESRC) through a grant (RES-589-28-0001)
to the ESRC CeMMAP.}}
\date{20 August 2016}
\begin{document}
\maketitle

\begin{abstract}
In a randomized control trial, the precision of an average treatment effect estimator and the power of the corresponding t-test can be
improved either by collecting data on additional individuals, or by collecting additional
covariates that predict the outcome variable. We propose the use of pre-experimental data
such as a census, or a household survey, to inform the choice of both the sample size and
the covariates to be collected. Our procedure seeks to minimize the resulting average
treatment effect estimator's mean squared error or the corresponding t-test's power, subject to the researcher's budget constraint.
We rely on a modification of an orthogonal greedy algorithm that is conceptually simple and easy to implement
in the presence of a large number of potential covariates, and does not require any
tuning parameters. In two empirical applications, we show that our procedure can lead to substantial gains of up to 58\%, measured either in terms of reductions in data collection costs or in terms of improvements in the precision of the treatment effect estimator. 

\medskip

\noindent
\textbf{JEL codes}: C55, C81. 

\medskip

\noindent \textbf{Key words}: randomized control trials, big data, data collection, optimal
survey design, orthogonal greedy algorithm, survey costs.

\end{abstract}


\newpage

\section{Introduction}

This paper is motivated by the observation that empirical research in economics increasingly
involves the collection of original data through laboratory or field experiments \citep[see,
e.g.][among others]{duflo2007, Banerjee:Duflo:11, BBR:11, List:11, List:Rasul:11,
Hamermesh:13}. This observation carries with it a call and an opportunity for research to
provide econometrically sound guidelines for data collection.

We consider the decision problem faced by a researcher designing the survey for a
randomized control trial (RCT). We assume that the goal of the researcher is to obtain precise estimates of the
average treatment effect and/or a powerful t-test of the hypothesis of no treatment effect, using the experimental data.\footnote{\citet{Tetenov:2015} provides a decision-theory-based rationale for using hypothesis tests in the RCT
and \citet{BCMS:2016} develop a theory of experimenters, focusing on the motivation of randomization among other things.}
Data collection is costly and the researcher is restricted by a budget,
which limits how much data can be collected. We focus on optimally trading off the number of individuals included in the RCT and the choice of covariates elicited as part of the data collection process.

There are, of course, other factors potentially influencing the choice of covariates to be collected in a survey for an RCT. For example, one may wish to learn about the mechanisms through which the RCT is operating, check whether treatment or control groups are balanced, or measure heterogeneity in the impacts of the intervention being tested.
In practice, researchers place implicit weights on each of the main objectives they consider when designing surveys, and consider informally the different trade-offs involved in their choices. We show that there is substantial value to making this decision process more rigorous and transparent through the use of data-driven tools that optimize a well-defined objective. Instead of attempting to formalize the whole research design process, we focus on one particular trade-off that we think is of first-order importance and particularly conducive to data-driven procedures.


We assume the researcher has access to
pre-experimental data from the population from which the experimental data will be drawn or at least from a population that shares similar second moments of the variables to be collected.
The data set includes all the potentially relevant variables that one would consider collecting
for the analysis of the experiment. 
%
The researcher faces a fixed budget for implementing the survey for the RCT. Given this budget, the researcher chooses the survey's sample size and set of covariates to optimize the resulting treatment effect estimator's precision and/or the corresponding t-test's power. This choice takes place before the implementation of the RCT and could, for example, be part of a pre-analysis plan in which, among other things, the researcher specifies outcomes of interest, covariates to be selected, and econometric techniques to be used.

In principle, the trade-offs involved in this choice involve basic economic reasoning. For each possible covariate, one should be comparing the marginal benefit and marginal cost of including it in the survey, which in turn, depend on all the other covariates included in the survey. As we discuss below, in simple settings it is possible to derive analytic and intuitive solutions to this problem. Although these are insightful, they only apply in unrealistic formulations of the problem. 

In general, for each covariate, there is a discrete choice of whether to include it or not, and for each possible sample size, one needs to consider all possible combinations of
covariates within the budget. This requires a solution to a computationally difficult combinatorial
optimization problem. This problem is especially challenging when the set of potential
variables to choose from is large, a case that is increasingly encountered in today's big
data environment. Fortunately, with the increased availability of high-dimensional data,
methods for the analysis of such data sets have received growing attention in several fields,
including economics \citep{BCC:14}. This literature makes available a rich set of new tools,
which can be adapted to our study of optimal survey design.

In this paper, we propose the use of a computationally attractive algorithm based on the
orthogonal greedy algorithm (OGA) -- also known as the orthogonal matching pursuit; see, for example, \citet{tropp2004greed} and  \cite{tropp2007signal} among many
others. To implement the OGA, it is necessary to specify the stopping rule, which in turn
generally requires a tuning parameter. One attractive feature of our algorithm is that, once
the budget constraint is given, there is no longer the need to choose a tuning parameter to
implement the proposed method, as the budget constraint plays the role of a stopping rule.
In other words, we develop an automated OGA that is tailored to our own decision
problem. Furthermore, it performs well even when there are a large number of potential
covariates in the pre-experimental data set.

There is a large and important body of literature on the design of experiments, starting with
\citet{Fisher:1935}. There also exists an extensive body of literature on sample size (power)
calculations; see, for example, \citet{McConnell2015} for a practical guide.
Both bodies of this literature are concerned with the precision of treatment effect estimates, but
neither addresses the problem that concerns us. For instance, \citet{McConnell2015} have
developed methods to choose the sample size when cost constraints are binding, but they
neither consider the issue of collecting covariates nor its trade-off with selecting the sample size.

Both our paper and the standard literature on power calculations rely on the availability of
information in pre-experimental data. The calculations we propose can be seen as a
substantive reformulation and extension of the more standard power calculations, which are
an important part of the design of any RCT.  When conducting
power calculations, one searches for the sample size that allows the researcher to detect a
particular effect size in the experiment. The role of covariates can be accounted for if one
has pre-defined the covariates that will be used in the experiment, and one knows (based on
some pre-experimental data) how they affect the outcome. Then, once the significance level
and power parameters are determined (specifying the type I and type II errors one is willing
to accept), all that matters is the impact of the sample size on the variance of the treatment
effect.

Suppose that, instead of asking what is the minimum sample size that allows us to detect a
given effect size, we asked instead how small an effect size we could detect with a particular
sample size (this amounts to a reversal of the usual power calculation). In this simple setting
with pre-defined covariates, the sample size would define a particular survey cost, and we would
essentially be asking about the minimum size of the variance of the treatment effect estimator that one
could obtain at this particular cost, which would lead to a question similar to the one asked
in this paper. Therefore, one simple way to describe our contribution is that we adapt and
extend the information in power calculations to account for the simultaneous selection of
covariates and sample size, explicitly considering the costs of data collection.

To illustrate the application of our method we examine two recent experiments for which
we have detailed knowledge of the process and costs of data collection. We ask two
questions. First, if there is a single hypothesis one wants to test in the experiment,
concerning the impact of the experimental treatment on one outcome of interest, what is the
optimal combination of covariate selection and sample size given by our method, and how
much of an improvement in the precision of the impact estimate can we obtain as a result?
Second, what are the minimum costs of obtaining the same precision of the treatment effect as
in the actual experiment, if one was to select covariates and sample size optimally (what we
call the ``equivalent budget'')?

We find from these two examples that by adopting optimal data collection rules, not only
can we achieve substantial increases in the precision of the estimates (statistical importance) for a given budget,
but we can also accomplish sizeable reductions in the equivalent budget
(economic importance). To illustrate the quantitative importance of the latter, we show that the optimal
selection of the set of covariates and the sample size leads to a reduction of about 45
percent (up to 58 percent) of the original budget in the first (second) example we consider, while
maintaining the same level of the statistical significance as in the original experiment.

To the best of our knowledge, no paper in the literature directly considers our data collection problem.
Some papers address related  but very different problems \citep[see][]{hahn2011we,
List-et-al:11,Bhattacharya:Dupas:2012, McKenzie2012,Dominitz:Manski:16}. They study
some issues of data measurement, budget allocation or efficient estimation; however, they
do not consider the simultaneous selection of the sample size and covariates for the RCTs as
in this paper. Because our problem is distinct from the problems studied in these papers, we
give a detailed comparison between our paper and the aforementioned papers in Section
\ref{sec:literature}.

More broadly, this paper is related to a recent emerging literature in economics that
emphasizes the importance of micro-level predictions and the usefulness of machine
learning for that purpose. For example, \citet{KLMZ:2015} argue that prediction problems
are abundant in economic policy analysis, and recent advances in machine learning can be
used to tackle those problems. Furthermore, our paper is related to the contemporaneous
debates on pre-analysis plans which demand, for example, the selection of sample sizes and covariates before
the implementation of an RCT; see, for example, \citet{Coffman:Niederle:15} and
\citet{Olken:15} for the advantages and limitations of the pre-analysis plans.

The remainder of the paper is organized as follows. In Section \ref{sec:problem}, we describe our
data collection problem in detail. In Section \ref{sec:algorithm}, we propose the use of a
simple algorithm based on the OGA. In Section  \ref{sec:cost}, we discuss the costs of data
collection in experiments. In Section \ref{sec:app}, we present two empirical applications,
in Section \ref{sec:literature},  we discuss the existing literature, and in Section \ref{sec:
conclusion}, we give concluding remarks.
Appendices provide details that are omitted from the main text.

\section{Data Collection Problem}\label{sec:problem}

Suppose we are planning an RCT in which we randomly assign individuals to either a treatment ($D=1$)
or a control group ($D=0$) with corresponding potential outcomes $Y_1$ and $Y_0$, respectively. After administering the treatment to the treatment group, we collect data on outcomes $Y$ for both groups so that $Y=DY_1 + (1-D)Y_0$. We also conduct a survey to collect data on a potentially very high-dimensional vector of covariates $Z$ (e.g. from a household survey covering demographics, social background, income etc.) that predicts potential outcomes. These covariates are a subset of the universe of predictors of potential outcomes, denoted by $X$. Random assignment of $D$ means that $D$ is independent of potential outcomes and of $X$.

Our goal is to estimate the average treatment effect $\beta_0 := E[Y_1-Y_0]$ as precisely as possible, where we measure precision by the finite sample mean-squared error (MSE) of the treatment effect estimator, and/or produce a powerful t-test of the hypothesis $H_0: \beta_0=0$. Instead of simply regressing $Y$ on $D$, we want to make use of the available covariates $Z$ to improve the precision of the resulting treatment effect estimator. Therefore, we consider estimating $\beta_0$ in the regression
\begin{equation}\label{eq: RA model}
	Y = \alpha_0 + \beta_0 D + \gamma_0'Z + U,
\end{equation}
%
%
where $(\alpha_0,\beta_0,\gamma_0')'$ is a vector of parameters to be estimated and $U$ is an
error term. The implementation of the RCT requires us to make two decisions that may have a significant impact on the estimation of and inference on the average treatment effect:
\begin{enumerate}
	\item Which covariates $Z$ should we select from the universe of potential predictors $X$?
	\item From how many individuals ($n$) should we collect data on $(Y, D, Z)$?
\end{enumerate}
Obviously, a large experimental sample size $n$ reduces the variance of the treatment effect estimator. Similarly, collecting more covariates, in particular strong predictors of potential outcomes, reduces the variance of the residual $U$ which, in turn, also improves the variance of the estimator. At the same time collecting data from more individuals and on more covariates is costly so that, given a finite budget, we want to find a combination of sample size $n$ and covariate selection $Z$ that leads to the most precise treatment effect estimator possible.

In this section, we propose a procedure to make this choice based on a pre-experimental data set on $Y$ and $X$, such as a pilot study or a census from the same population from which we plan to draw the RCT sample.\footnote{In fact, we do not need the populations to be identical, but only require second moments to be the same.} The combined data collection and estimation procedure can be
summarized as follows:
\begin{enumerate}
	\item Obtain pre-experimental data $\mathcal{S}_{\rm pre}$ on $(Y, X)$.
	\item Use data in $\mathcal{S}_{\rm pre}$ to select the covariates $Z$ and sample size $n$.
	\item Implement the RCT and collect the experimental data $\mathcal{S}_{\exp}$ on $(Y, D, Z)$.
	\item Estimate the average treatment effect using $\mathcal{S}_{\exp}$.
	\item Compute standard errors.
\end{enumerate}
We now describe the five steps listed above in more detail. The main component of our
procedure consists of a proposal for the optimal choice of $n$ and $Z$ in Step 2, which is
described more formally in Section~\ref{sec:algorithm}.

\setcounter{bean}{0}
\begin{center}
\begin{list}
{\textbf{Step \arabic{bean}}.}{\usecounter{bean}}
\item \textbf{Obtain pre-experimental data.} We assume the availability of data on
    outcomes  $Y\in
    \mathbb{R}$ and covariates $X\in \mathbb{R}^M$ for the population from which we
    plan to draw the experimental data. We denote the pre-experimental sample of size $N$ by
    $\mathcal{S}_{\rm pre} := \{Y_i,X_i\}_{i=1}^N$. Our
    framework allows the number of potential covariates, $M$, to be very large (possibly
    much larger than the sample size $N$).  Typical examples would be census data,
    household surveys, or data from other, similar experiments. Another possible
    candidate is a pilot experiment that was carried out before the larger-scale role out of the
    main experiment, provided that the sample size $N$ of the pilot study is large enough for
    our econometric analysis in Step 2.

\item \textbf{Optimal selection of covariates and sample size.} We want to use the
    pre-experimental data to choose the sample size, and which covariates should be in our
    survey. Let $S\in \{0,1\}^M$ be a vector of ones and zeros of the same dimension as $X$.
    We say that the $j$th covariate ($X^{(j)}$) is selected if $S_j=1$, and denote by $X_S$
    the subvector of $X$ containing elements that are selected by $S$. For example, consider
    $X = (X^{(1)}, X^{(2)}, X^{(3)})$ and $S = (1,0,1)$. Then $X_S = (X^{(1)},
    X^{(3)})$. For any vector of coefficients $\gamma\in\mathbb{R}^M$, let $\mathcal{I}(\gamma)\in\{0,1\}^M$ denote the nonzero elements of $\gamma$ and  $Y(\gamma):=Y - \gamma_{\mathcal{I}(\gamma)}'X_{\mathcal{I}(\gamma)}$. We can then  rewrite \eqref{eq: RA model} as
	\begin{equation}
		Y(\gamma) = \alpha_0 + \beta_0 D + U(\gamma),
	\end{equation}
	where $\gamma\in\mathbb{R}^M$ and $U(\gamma):=Y -\alpha_0-\beta_0 D -\gamma_{\mathcal{I}(\gamma)}'X_{\mathcal{I}(\gamma)}$. For a given $\gamma$ and sample size $n$, we denote by $\hat{\beta}(\gamma,n)$ the OLS estimator of $\beta_0$ in a regression of $Y(\gamma)$ on a constant and $D$, using a random sample $\{Y_i,D_i,X_i\}_{i=1}^n$. We also consider the two-sided\footnote{The same arguments in this paper straightforwardly carry over to a one-sided t-test.} t-test of
	$$H_0:\; \beta_0=0\qquad\text{vs.}\qquad H_1:\; \beta_0 \neq 0 $$
	using the t-statistic
	$$\hat{t}(\gamma,n) := \frac{\hat{\beta}(\gamma,n)}{\sigma(\gamma)/\sqrt{n \bar{D}_n(1-\bar{D}_n)}}, $$
	where $\sigma^2(\gamma) := Var(U(\gamma))$ is the residual variance and $\bar{D}_n := \sum_{i=1}^n D_i/n$ the number of individuals in the treatment group divided by the sample size $n$.

    Data collection is costly and therefore constrained by a budget of the
    form $c(S,n) \leq B$, where $c(S,n)$ are the costs of collecting  the variables given by
    selection $S$ from $n$ individuals, and $B$ is the researcher's budget.


We assume the researcher is interested in collecting data so as to ensure good statistical properties of the resulting treatment effect estimator and the corresponding t-test. We consider two criteria, the MSE of $\hat{\beta}(\gamma,n)$ and the power of the t-test that employs $\hat{t}(\gamma,n)$. We now briefly argue that minimizing the MSE of $\hat{\beta}(\gamma,n)$ and maximizing power of the t-test lead to equivalent optimization problems for selecting the optimal collection of covariates and sample size. Subsequently, we directly consider that optimization problem and the approximation of its solution, thereby transparently covering both objectives at the same time.

First, consider choosing the experimental sample size $n$ and the covariate selection $S$ so as
to minimize the finite sample MSE of $\hat{\beta}(\gamma,n)$, i.e., we want to choose $n$
and $\gamma$ to minimize
$$
MSE\left( \hat{\beta}(\gamma,n) \given D_1,\ldots,D_n\right) :=
E\left[ \left(\hat{\beta}(\gamma,n)- \beta_0\right)^2 \given D_1,\ldots,D_n\right].
$$
%
subject to the budget constraint.

\begin{ass}\label{ass: random sampling}
	(i) $\{(Y_i,X_i,D_i)\}_{i=1}^n$ is an i.i.d. sample from the distribution of $(Y,X,D)$ such that $D$ is completely randomized. (ii) $\text{Var} (U(\gamma)|D=1) = \text{Var} (U(\gamma) | D=0)$ for all $\gamma\in\mathbb{R}^M$.
\end{ass}

Part (i) of this assumption is standard. There are other  assignment mechanisms such as re-randomization, but we focus on the simplest case in the paper. Part (ii) is a homoskedasticity assumption that is common in standard power calculations and requires the residual variance to be the same across the treatment and control group. This assumption is satisfied, for example, when the treatment effect is constant across individuals in the experiment. If the researcher feels uncomfortable with this assumption, it is necessary to collect a pilot study that produces pre-experimental data from the joint distribution of $(D,X)$. The power of the homoskedasticity assumption is that, as we discuss in more detail below, data on $D$ is not required for the optimal choice of $n$ and $S$.

The following lemma characterizes the finite sample MSE of the estimator under the above assumption. 

\begin{lemma}\label{lem: equiv MSE}
Under Assumption~\ref{ass: random sampling}, for any $\gamma\in\mathbb{R}^M$,
\begin{equation}\label{eq: min MSE}
	MSE\left(\hat{\beta}(\gamma,n) \given D_1,\ldots,D_n\right) = \frac{\sigma^2(\gamma)}{n \bar{D}_n(1-\bar{D}_n)}.
\end{equation}
\end{lemma}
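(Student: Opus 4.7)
The idea is to reduce the MSE to the conditional variance of a difference of two independent sample means. Because $D$ is binary, the OLS coefficient on $D$ in a regression of $Y(\gamma)$ on a constant and $D$ is simply the difference of group means, so I would first write
$$\hat\beta(\gamma,n) = \bar{Y}_1(\gamma) - \bar{Y}_0(\gamma),$$
where $\bar{Y}_d(\gamma) = n_d^{-1}\sum_{i:\,D_i=d} Y_i(\gamma)$ and $n_1 = n\bar{D}_n$, $n_0 = n(1-\bar{D}_n)$. This form makes the conditioning on $(D_1,\ldots,D_n)$ transparent and isolates the contributions of the two groups.

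Next, I would invoke Assumption~\ref{ass: random sampling}(i): complete randomization makes $D$ independent of the potential outcomes and of $X$, so conditional on $(D_1,\ldots,D_n)$ the treated observations $\{Y_i(\gamma): D_i=1\}$ are i.i.d.\ with the distribution of $Y_1 - \gamma_{\mathcal{I}(\gamma)}'X_{\mathcal{I}(\gamma)}$, and analogously for the controls. Taking expectations gives $E[\bar{Y}_d(\gamma)\mid D_1,\ldots,D_n] = E[Y_d] - \gamma_{\mathcal{I}(\gamma)}' E[X_{\mathcal{I}(\gamma)}]$, so $E[\hat\beta(\gamma,n)\mid D_1,\ldots,D_n] = E[Y_1]-E[Y_0] = \beta_0$. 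The conditional bias therefore vanishes, and the MSE equals the conditional variance.

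For the variance, independence across individuals and between the two groups gives
$$\text{Var}\bigl(\hat\beta(\gamma,n)\mid D_1,\ldots,D_n\bigr) = \frac{\sigma_1^2(\gamma)}{n_1} + \frac{\sigma_0^2(\gamma)}{n_0},$$
where $\sigma_d^2(\gamma) := \text{Var}(U(\gamma)\mid D=d)$. Assumption~\ref{ass: random sampling}(ii) equates $\sigma_1^2(\gamma)$ and $\sigma_0^2(\gamma)$, and because $\beta_0 = E[Y_1-Y_0]$ combined with random assignment makes $E[U(\gamma)\mid D]$ constant in $D$, the law of total variance identifies the common within-group variance with the unconditional $\sigma^2(\gamma) = \text{Var}(U(\gamma))$. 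Summing the two terms and using $n_0+n_1 = n$ and $n_0 n_1 = n^2\bar{D}_n(1-\bar{D}_n)$ then delivers \eqref{eq: min MSE}.

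The only genuinely delicate step is the last identification: recognizing that the unconditional residual variance $\sigma^2(\gamma)$ equals the common within-group variance requires both parts of Assumption~\ref{ass: random sampling}, since (ii) on its own only equates the conditional variances, while the alignment of $\beta_0$ with $E[Y_1-Y_0]$ is what kills the between-group contribution in the law of total variance. Once that point is settled, the remainder is routine algebra.
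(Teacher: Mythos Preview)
Your proof is correct and follows essentially the same approach as the paper: both recognize $\hat\beta(\gamma,n)$ as a difference of group means, establish conditional unbiasedness, and reduce the MSE to the sum of two within-group variances. The paper's proof is terser and simply asserts that $\text{Var}(Y_i-\gamma'X_i\mid D_i=0)=\text{Var}(Y_i-\gamma'X_i\mid D_i=1)=\sigma^2(\gamma)$ without justification; you are more careful in flagging, via the law of total variance, that identifying the common conditional variance with the \emph{unconditional} $\sigma^2(\gamma)$ requires not only part~(ii) but also that $E[U(\gamma)\mid D]$ be constant in $D$, which follows from part~(i) together with $\beta_0=E[Y_1-Y_0]$.
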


The proof of this Lemma can be found in the appendix. Note that for each ($\gamma, n$), the MSE is minimized by the equal splitting between the
treatment and control groups. Hence,  suppose that the treatment and control groups are of
exactly the same size (i.e., $\bar{D}_n = 0.5$).  By Lemma~\ref{lem: equiv MSE}, minimizing the
MSE of the treatment effect estimator subject to the budget constraint,
\begin{equation}\label{eq: population problem}
	\min_{n\in \mathbb{N}_+,\, \gamma\in \mathbb{R}^M} MSE\left(\hat{\beta}(\gamma,n) \given D_1,\ldots,D_n\right)\qquad
	\text{s.t.}\qquad  c(\mathcal{I}(\gamma),n) \leq B,
\end{equation}
is equivalent to minimizing the residual variance in a regression of $Y$ on $X$, divided by
the sample size,
\begin{equation}\label{eq: population problem in terms of variance}
	\min_{n\in \mathbb{N}_+,\, \gamma\in \mathbb{R}^M} \frac{\sigma^2(\gamma)}{n}\qquad
	\text{s.t.}\qquad  c(\mathcal{I}(\gamma),n) \leq B,
\end{equation}

Now, consider choosing the experimental sample size $n$ and the covariate selection $S$ so as to maximize power of the two-sided t-test based on $\hat{t}(\gamma,n)$. Denote by $c_{\alpha}$ and $\Phi(\cdot)$ the $\alpha$-quantile and cumulative distribution function of the standard normal distribution, respectively. The following lemma calculates the test's finite sample power under the assumption of joint normality of $Y$ and $X$.

\begin{lemma}\label{lem: equiv power}
Suppose Assumption~\ref{ass: random sampling} holds and that $(Y,X)$ are jointly normal. Then, for any $\alpha\in(0,1)$, $\beta\neq 0$, and $\gamma\in\mathbb{R}^M$,
\begin{multline*}\label{eq: power}
	P_{\beta}\left(\left|\hat{t}(\gamma,n)\right| > c_{1-\alpha/2} \given D_1,\ldots,D_n\right)\\
	= 1+ \Phi\left( \frac{ \beta}{\sigma(\gamma) / \sqrt{n\bar{D}_n(1-\bar{D}_n)}} -c_{1-\alpha/2}\right) - \Phi\left( \frac{ \beta}{\sigma(\gamma) / \sqrt{n\bar{D}_n(1-\bar{D}_n)}} +c_{1-\alpha/2}\right),
\end{multline*}
where $P_{\beta}$ denotes probabilities under the assumption that $\beta$ is the true coefficient in front of $D$. Furthermore, $P_{\beta}(|\hat{t}(\gamma,n)| > c_{1-\alpha/2} | D_1,\ldots,D_n)$ is decreasing in $\sigma(\gamma)/ \sqrt{n\bar{D}_n(1-\bar{D}_n)}$.
\end{lemma}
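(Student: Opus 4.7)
The plan is to condition throughout on $(D_1,\ldots,D_n)$ and exploit three facts: (a) the OLS slope estimator $\hat{\beta}(\gamma,n)$ is a linear combination of the transformed outcomes $Y_i(\gamma) = Y_i - \gamma_{\mathcal{I}(\gamma)}'X_{i,\mathcal{I}(\gamma)}$, (b) by joint normality of $(Y,X)$ and the linearity of $Y_i(\gamma)$ in $(Y_i,X_i)$, the vector $(Y_1(\gamma),\ldots,Y_n(\gamma))$ is jointly Gaussian, and (c) Lemma~\ref{lem: equiv MSE} already supplies the conditional variance of $\hat{\beta}(\gamma,n)$. Combining these, $\hat{\beta}(\gamma,n) \mid D_1,\ldots,D_n$ is normally distributed with mean $\beta$ under $P_{\beta}$ (by standard unbiasedness of OLS for $\beta$ in the regression of $Y(\gamma)$ on $(1,D)$, using that $D$ is independent of $X$ so the population projection coefficient on $D$ coincides with $\beta$) and variance $\sigma^2(\gamma)/[n\bar{D}_n(1-\bar{D}_n)]$.

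Once this is in hand, write $\hat{t}(\gamma,n) = \hat{\beta}(\gamma,n)/[\sigma(\gamma)/\sqrt{n\bar{D}_n(1-\bar{D}_n)}]$ and observe that its denominator is deterministic conditional on $D_1,\ldots,D_n$ (here one uses that $\sigma(\gamma)$ is the population residual standard deviation, not an estimate of it). Therefore $\hat{t}(\gamma,n) \mid D_1,\ldots,D_n \sim N(\mu,1)$ with $\mu := \beta/[\sigma(\gamma)/\sqrt{n\bar{D}_n(1-\bar{D}_n)}]$. The probability $P_{\beta}(|\hat{t}(\gamma,n)|>c_{1-\alpha/2}\mid D_1,\ldots,D_n)$ then equals
\begin{equation*}
[1-\Phi(c_{1-\alpha/2}-\mu)] + \Phi(-c_{1-\alpha/2}-\mu),
\end{equation*}
and applying the identity $1-\Phi(x)=\Phi(-x)$ to the first term rearranges this into the form stated in the lemma.

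For the monotonicity claim, set $c := c_{1-\alpha/2}$ and view the power as the function $g(\mu) := 1+\Phi(\mu-c)-\Phi(\mu+c)$. Differentiating gives $g'(\mu) = \phi(\mu-c)-\phi(\mu+c)$, and since $\phi$ is symmetric and strictly decreasing in $|\cdot|$, one checks that $|\mu-c|<|\mu+c|$ whenever $\mu>0$, so $g$ is strictly increasing on $(0,\infty)$; by symmetry $g(\mu)=g(-\mu)$, so $g$ is strictly increasing in $|\mu|$. Since $|\mu| = |\beta|/[\sigma(\gamma)/\sqrt{n\bar{D}_n(1-\bar{D}_n)}]$ is strictly decreasing in $\sigma(\gamma)/\sqrt{n\bar{D}_n(1-\bar{D}_n)}$ (here we need $\beta\neq 0$, which is assumed), the power is strictly decreasing in that quantity.

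The main obstacle is the normality step: one must be careful that joint normality of $(Y,X)$, combined with independence of $D$ from $(Y_0,Y_1,X)$ under random assignment and the homoskedasticity in Assumption~\ref{ass: random sampling}(ii), really yields conditional normality of $\hat{\beta}(\gamma,n)$ given $D_1,\ldots,D_n$ with the stated variance. Beyond that, the algebraic manipulation of $\Phi$ and the monotonicity argument are routine.
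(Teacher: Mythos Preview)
Your proposal is correct and follows essentially the same approach as the paper: establish conditional normality of $\hat{\beta}(\gamma,n)$ given $D_1,\ldots,D_n$ (invoking Lemma~\ref{lem: equiv MSE} for the variance), observe that the denominator of $\hat{t}(\gamma,n)$ is deterministic so the statistic is $N(\mu,1)$, compute the two-sided rejection probability, and then differentiate to obtain monotonicity. The only cosmetic difference is that the paper differentiates the power directly in $\sigma(\gamma)$ and argues the bracketed difference of densities has the sign of $-\beta$, whereas you differentiate in $\mu$, use the symmetry $g(\mu)=g(-\mu)$, and then compose with the monotone map $\sigma(\gamma)/\sqrt{n\bar{D}_n(1-\bar{D}_n)}\mapsto |\mu|$; the two arguments are equivalent.
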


The lemma shows that, under the normality assumption and for any alternative $\beta\neq 0$ and size $\alpha$, the power of the two-sided t-test is a decreasing transformation of $\frac{\sigma^2(\gamma)}{n\bar{D}_n(1-\bar{D}_n)}$. Therefore, assigning as many individuals to the treatment as to the control group, besides minimizing the MSE above also maximizes power. Therefore, assuming again $\bar{D}_n=0.5$, maximizing power subject to the budget constraint,
\begin{equation*}
	\max_{n\in \mathbb{N}_+,\, \gamma\in \mathbb{R}^M} P_{\beta}\left(\left|\hat{t}(\gamma,n)\right| > c_{1-\alpha/2} \given D_1,\ldots,D_n\right)\qquad
	\text{s.t.}\qquad  c(\mathcal{I}(\gamma),n) \leq B,
\end{equation*}
is also equivalent to minimizing the residual variance in a regression of $Y$ on $X$, divided by
the sample size, as in \eqref{eq: population problem in terms of variance}. Notice that even when $(Y,X)$ are not jointly normal, the power expression in Lemma~\ref{lem: equiv power} may be approximately correct because the Berry-Esseen bound guarantees that the t-statistic's distribution is close to normal as long as $n$ is not too small.

Having motivated the optimization problem in \eqref{eq: population problem in terms of variance} in terms of minimization of the MSE of the treatment effect estimator as well as in terms of maximization of power of the corresponding t-test, we now discuss how to approximate the solution to \eqref{eq: population problem in terms of variance} in a given finite sample.

Importantly, notice that the optimization problem \eqref{eq: population problem in terms of variance} depends on the data only through the residual variance $\sigma^2(\gamma)$, which, under Assumption~\ref{ass: random sampling}, can be estimated before the randomization takes place, i.e. using the pre-experimental sample $\mathcal{S}_{\rm pre}$.
Therefore, employing the standard sample variance estimator of $\sigma^2(\gamma)$, the sample counterpart of our population optimization problem \eqref{eq: population problem in terms of variance} is
\begin{equation}\label{eq: sample problem}
	\min_{n\in \mathbb{N}_+,\, \gamma\in \mathbb{R}^M} \frac{1}{nN}\sum_{i=1}^N (Y_i-\gamma'X_i)^2\qquad \text{s.t.}\qquad  c(\mathcal{I}(\gamma),n) \leq B.
\end{equation}
The problem \eqref{eq: sample problem}, which is based on the pre-experimental sample, approximates the population problem \eqref{eq: population problem in terms of variance} for the experiment if the second moments in the pre-experimental sample are close to the second moments in the experiment (which holds, for example, if the population in the pre-experimental sample is the same as the population in the experiment).

In Section~\ref{sec:algorithm}, we describe a computationally attractive OGA that approximates the solution to \eqref{eq: sample problem}. The OGA has been studied extensively in the signal extraction literature and is
implemented in most statistical software packages.
 Appendices A and D show that this algorithm possesses desirable theoretical and practical properties.

The basic idea of the algorithm (in its simplest form) is straightforward. Fix a sample size
$n$. Start by finding the covariate that has the highest correlation with the outcome. Regress
the outcome on that variable, and keep the residual. Then, among the remaining covariates,
find the one that has the highest correlation with the residual. Regress the outcome onto
both selected covariates, and keep the residual. Again, among the remaining covariates, find
the one that has the highest correlation with the new residual, and proceed as before. We
iteratively select additional covariates up to the point when the budget constraint is no
longer satisfied. Finally, we repeat this search process for alternative sample sizes,
and search for the  combination of sample size and covariate selection that minimizes the
residual variance. Denote the OGA solution by $(\hat{n},\hat{\gamma})$ and let $\hat{\mathcal{I}}:= \mathcal{I}(\hat{\gamma})$ denote the selected covariates. See Section~\ref{sec:algorithm} for more details.

Note that, generally speaking, the OGA requires us to specify how to terminate the iterative procedure.
One attractive feature of our algorithm is that the budget constraint plays the role of
the stopping rule, without introducing any tuning parameters.

\item \textbf{Experiment and data collection.} Given the optimal selection of
    covariates $\hat{\mathcal{I}}$ and sample size
    $\hat{n}$, we randomly assign $\hat{n}$ individuals to
    either the treatment or the control group (with equal probability), and collect the
    covariates $Z:= X_{\hat{\mathcal{I}}}$ from each of them. This yields the experimental
    sample $\mathcal{S}_{\exp} := \{Y_i,D_i,Z_i\}_{i=1}^{\hat{n}}$ from
    $(Y,D,X_{\hat{\mathcal{I}}})$.

\item \textbf{Estimation of the average treatment effect.} We regress $Y_i$ on $(1, D_i, Z_i)$ using the experimental sample $\mathcal{S}_{\exp}$. The OLS estimator of the coefficient on $D_i$ is the average treatment effect estimator $\hat{\beta}$.

\item \textbf{Computation of standard errors.} Assuming the two samples
    $\mathcal{S}_{\rm pre}$ and $\mathcal{S}_{\exp}$ are independent, and that treatment
    is randomly assigned, the presence of the covariate selection Step 2 does not affect the asymptotic validity of the standard errors that one would use in the absence of Step 2. Therefore, asymptotically valid standard errors of $\hat{\beta}$ can be
    computed in the usual fashion \citep[see, e.g.,][]{Imbens:Rubin:2015}.
\end{list}
\end{center}

\subsection{Discussion}
\label{subsec: discussion}

In this subsection, we discuss some of conceptual and practical properties of our proposed data collection procedure.

\paragraph{Availability of Pre-Experimental Data.} As in standard power calculations, pre-experimental data provide essential
information for our procedure. The availability of such data is very common, ranging from census data sets and other household surveys to studies that were conducted in a similar context as the RCT we are planning to implement. In addition, if no such data set is available, one may consider running a pilot project that collects pre-experimental data.
We recognize that in some cases it might be difficult to
have the required information readily available. 
However, this is a problem that affects any attempt to a data-driven design of surveys, including standard power calculations. Even when pre-experimental data are imperfect, such calculations provide a valuable guide to survey design, as long as the
available pre-experimental data are not very different from the ideal data. In particular, our procedure only requires second moments of the pre-experimental variables to be similar to those in the population of interest.

\paragraph{The Optimization Problem in a Simplified Setup.} In general, the problem in \eqref{eq: sample problem} does not have a simple
solution and requires joint optimization problem over the sample size $n$ and the coefficient $\gamma$. To gain some intuition about the trade-offs in this problem, in Appendix C we consider a
simplified setup in which all covariates are orthogonal to each other, and the budget constraint has
a very simple form. In this case, the constraint can be substituted into the objective and the optimization
becomes univariate and unconstrained. We show that if all covariates have the same price, then one wants to
choose covariates up to the point where the percentage increase in survey costs equals the
percentage reduction in the residual variance from the last covariate. Furthermore, the elasticity of the
residual variance with respect to changes in sample size should equal the elasticity of the residual variance with
respect to an additional covariate. If the costs of data collection vary with covariates, then
this conclusion is slightly modified. If we organize variables by type according to their
contribution to the residual variance, then we want to choose variables of each type up to the point
where the percent marginal contribution of each variable to the residual variance equals its percent
marginal contribution to survey costs.

\paragraph{Imbalance and Re-randomization.}  In RCTs, covariates typically do not only serve as a means to improving the precision of treatment effect estimators, but also for checking whether the control and treatment groups are balanced. 
See, for example, \citet{Bruhn:McKenzie:09}
for practical issues concerning randomization and balance.
To rule out large biases due to imbalance, it is important to carry out balance checks for strong predictors of potential outcomes. Our procedure selects the strongest predictors as long as they are not too expensive (e.g. household survey questions such as gender, race, number of children etc.) and we can check balance for these covariates. However, in principle, it is possible that our procedure does not select a strong predictor that is very expensive (e.g. baseline test scores). Such a situation occurs in our second empirical application (Section~\ref{sec: school grants}). In this case, in Step 2, we recommend running the OGA a second time, forcing the inclusion of such expensive predictors. If the MSE of the resulting estimate is not much larger than that from the selection without the expensive predictor, then we may prefer the former selection to the latter so as to reduce the potential for bias due to imbalance at the expense of slightly larger variance of the treatment effect estimator.

An alternative approach to avoiding imbalance considers re-randomization until some criterion capturing the degree of balance is met (e.g., \citet{Bruhn:McKenzie:09}, \cite{morgan2012gf,Morgan2015re} and \cite{LDR:2016}). Our criterion for the covariate selection procedure in Step 2 can readily be adapted to this case; however, the details are not worked out here. It is an interesting future research topic to fully develop a data collection method for re-randomization based on the modified variance formulae in \cite{morgan2012gf}  and \cite{LDR:2016}, which account for the effect of re-randomization on the treatment effect estimator. 

\paragraph{Expensive, Strong Predictors.} When some covariates have similar predictive power, but respective prices that are substantially different, our covariate selection procedure may produce a suboptimal choice. For example, if the covariate with the highest price is also the most predictive, OGA selects it first even when there are other covariates that are much cheaper but only slightly less predictive. In Section \ref{sec: school grants}, we encounter an example of such a situation and propose a simple robustness check for whether removing an expensive, strong predictor may be beneficial.

\paragraph{Properties of the Treatment Effect Estimator.} Since the treatment indicator is assumed independent of $X$, standard asymptotic theory of the treatment effect estimator continues to hold for our estimator (despite the addition of a covariate selection step). For example, it is unbiased, consistent, asymptotically normal, and adding the covariates $X$ in the regression in \eqref{eq: RA model} cannot increase the asymptotic variance of the estimator. In fact, inclusion of a covariate strictly reduces the estimator's asymptotic variance as long as the corresponding true regression coefficient is not zero. All these results hold regardless of whether the true conditional expectation of $Y$ given $D$ and $X$ is in fact linear and additive separable as in \eqref{eq: RA model} or not. In particular, in some applications one may want to include interaction terms of $D$ and $X$ \citep[see, e.g.,][]{Imbens:Rubin:2015}. Finally, the treatment effect can be allowed to be heterogeneous (i.e. vary across individuals $i$) in which case our procedure estimates the average of those treatment effects.

%

\paragraph{An Alternative to Regression.} Step 4 consists of running the regression in \eqref{eq: RA model}. There are instances when it is desirable to modify this step. For example, if the selected sample size $\hat{n}$ is smaller than the number of selected covariates, then the regression in \eqref{eq: RA model} is not feasible. However, if the pre-experimental sample $\mathcal{S}_{\rm pre}$ is large enough, we can instead compute the OLS estimator $\hat{\gamma}$ from the regression of $Y$ on $X_{\hat{\mathcal{I}}}$ in $\mathcal{S}_{\rm pre}$. Then use $Y$ and $Z$ from the experimental sample $\mathcal{S}_{\rm exp}$ to construct the new outcome variable
    $\hat{Y}_i^\ast := Y_i - \hat{\gamma}'Z_i$ and compute the treatment effect
    estimator $\hat{\beta}$ from the regression of $\hat{Y}_i^\ast$ on $(1,D_i)$. This approach avoids fitting too many parameters when the experimental sample is small and has the additional desirable property that the resulting estimator is free from bias due to imbalance in the selected covariates.

\paragraph{Multivariate Outcomes.} It is straightforward to extend our data collection method to the case when there are multivariate outcomes. Appendix G provides details regarding how to deal with a vector of outcomes when we select the common set of regressors for all outcomes.

%

\section{A Simple Greedy Algorithm}\label{sec:algorithm}

In practice, the vector $X$ of potential covariates is typically high-dimensional, which
makes it challenging to solve the optimization problem \eqref{eq: sample problem}. In this
section, we propose a computationally feasible algorithm that is both conceptually simple
and performs well in our simulations. In particular, it requires only running many univariate, linear regressions
and can therefore easily be implemented in popular statistical packages such as STATA.

We split the joint optimization problem in \eqref{eq: sample problem} over $n$ and $\gamma$ into
two nested problems. The outer problem searches over the optimal sample size $n$, which
is restricted to be on a grid $n\in\mathcal{N}:=\{n_0,n_1,\ldots,n_K\}$, while the inner
problem determines the optimal selection of covariates for each sample size $n$:
\begin{equation}\label{eq: sample problem - nested}
	\min_{n\in\mathcal{N}} \frac{1}{n} \min_{\gamma\in \mathbb{R}^M} \frac{1}{N} \sum_{i=1}^N (Y_i-\gamma'X_i)^2\qquad \text{s.t.}\qquad c(\mathcal{I}(\gamma),n) \leq B.
\end{equation}
To convey our ideas in a simple form,  suppose for the moment that the budget constraint
has the following linear form,
\begin{align*}
c(\mathcal{I}(\gamma),n) = n \cdot |\mathcal{I}(\gamma)| \leq B,
\end{align*}
where $|\mathcal{I}(\gamma)|$ denotes the number of non-zero elements of $\gamma$.
Note that the budget constraint puts the restriction on the number of selected covariates,
that is, $|\mathcal{I}(\gamma)| \leq B/n$.

It is known to be  NP-hard (non-deterministic polynomial time hard) to find a solution to the
inner optimization problem in \eqref{eq: sample problem - nested} subject to the constraint
that $\gamma$ has $m$ non-zero components, also called an $m$-term approximation,
where $m$ is the integer part of $B/n$ in our problem. In other words, solving \eqref{eq:
sample problem - nested} directly is not feasible unless the dimension of covariates, $M$, is small \citep{natarajan1995tr,Davis:1997jk}.

There exists a class of computationally attractive procedures called greedy algorithms that
are able to approximate the infeasible solution. See \citet{Temlyakov:2011} for a detailed
discussion of greedy algorithms in the context of approximation theory.
\citet{tropp2004greed}, \citet{tropp2007signal}, \citet{barron2008kk}, \citet{Zhang:2009},
\citet{Huang-et-al:2011},  \citet{Ing:Lai:11}, and \citet{sancetta2016}, among many others,
demonstrate the usefulness of greedy algorithms for signal recovery in information theory,
and for the regression problem in statistical learning. We use a variant of OGA that can allow for selection of groups of variables (see, for example, \citet{Huang-et-al:2011}).


To formally define our proposed algorithm, we introduce some notation. For a vector $v$ of
$N$ observations $v_1,\ldots,v_N$, let $\|v\|_N := (1/N \sum_{i=1}^N v_i^2)^{1/2}$ denote
the empirical $L^2$-norm and let $\bY := (Y_1,\ldots,Y_N)'$.

Suppose that the covariates $X^{(j)}$, $j=1,\ldots,M$, are organized into $p$
pre-determined groups $X_{G_1}, \ldots, X_{G_p}$, where $G_k\subseteq \{1,\ldots,p\}$
indicates the covariates of group $k$. We denote the corresponding matrices of observations
by bold letters (i.e., $\bX_{G_k}$ is the $N \times |G_k|$ matrix of observations on
$X_{G_k}$, where $|G_k|$ denotes the number of elements of the index set $G_k$). By a
slight abuse of notation, we let $\bX_{k} := \bX_{\{k\}}$ be the column vector of
observations on $X_k$ when $k$ is a scalar. One important special case is that in which each group
consists of a single regressor. Furthermore, we allow for overlapping groups; in other words,
some elements can be included in multiple or even all groups. The group structure occurs naturally in experiments where data collection is carried out through surveys whose
questions can be grouped in those concerning income, those concerning education, and so
on.  This can also occur naturally when we consider multivariate outcomes. 
See Appendix G for details.

Suppose that the largest group size $J_{\max} := \max_{k=1,\ldots,p} |G_k|$ is small, so
that we can implement orthogonal transformations \emph{within each group} such that $(\bX_{G_j}'
\bX_{G_j})/N = \mathbf{I}_{|G_j|}$, where $\mathbf{I}_{d}$ is the $d$-dimensional
identity matrix. In what follows, assume that $(\bX_{G_j}' \bX_{G_j})/N =
\mathbf{I}_{|G_j|}$ without loss of generality. Let $|\cdot|_2$ denote the $\ell_2$ norm.
The following procedure describes our algorithm.

\setcounter{bean}{0}
\begin{center}
\begin{list}
{\textsc{Step} \arabic{bean}.}{\usecounter{bean}}
\item Set the initial sample size $n=n_0$.
\item Group OGA for a given sample size $n$:
\begin{enumerate}
\item[(a)] initialize the inner loop at $k=0$ and set the initial residual
    $\hat{\br}_{n,0}=\bY$, the initial covariate indices $\hat{\mathcal{I}}_{n,0} =
    \emptyset$ and the initial group indices $\hat{\mathcal{G}}_{n,0}=\emptyset$;
\item[(b)] separately regress $\hat{\br}_{n,k}$ on each group of regressors in
    $\{1,\ldots,p\}\backslash \hat{\mathcal{G}}_{n,k}$; call $\hat{j}_{n,k}$ the group of
    regressors with the largest $\ell_2$ regression coefficients,
$$ \hat{j}_{n,k} := \arg\max_{j\in \{1,\ldots,p\}\backslash
\hat{\mathcal{G}}_{n,k}}  \left| \bX_{G_j}' \hat{\br}_{n,k} \right|_2; $$ add
$\hat{j}_{n,k}$ to the set of selected groups, $\hat{\mathcal{G}}_{n,k+1} =
\hat{\mathcal{G}}_{n,k} \cup \{\hat{j}_{n,k}\}$;
\item[(c)] regress $\bY$ on the covariates $\bX_{\hat{\mathcal{I}}_{n,k+1}}$ where
    $\hat{\mathcal{I}}_{n,k+1} := \hat{\mathcal{I}}_{n,k} \cup G_{\hat{j}_{n,k}}$; call
    the regression coefficient
    $\hat{\gamma}_{n,k+1}:=(\bX_{\hat{\mathcal{I}}_{n,k+1}}'\bX_{\hat{\mathcal{I}}_{n,k+1}})^{-1}
    \bX_{\hat{\mathcal{I}}_{n,k+1}}'\bY$  and the residual $\hat{\br}_{n,k+1}:=\bY -
    \bX_{\hat{\mathcal{I}}_{n,k+1}}\hat{\gamma}_{n,k+1}$;
\item[(d)] increase $k$ by one and continue with (b) as long as
    $c(\hat{\mathcal{I}}_{n,k},n)\leq B$ is satisfied;
\item[(e)] let $k_n$ be the number of selected groups; call the resulting submatrix of
    selected regressors $\mathbf{Z}:=\bX_{\hat{\mathcal{I}}_{n,k_n}}$ and
    $\hat{\gamma}_{n} := \hat{\gamma}_{n,k_n}$, respectively.
\end{enumerate}
\item Set $n$ to the next sample size in $\mathcal{N}$, and go to Step 2 until (and
    including) $n=n_K$.
\item Set $\hat{n}$ as the sample size that minimizes the residual variance:
$$ \hat{n} := \arg\min_{n\in\mathcal{N}} \frac{1}{nN}\sum_{i=1}^N
\left(Y_i- \mathbf{Z}_{i}\hat{\gamma}_n\right)^2.$$
\end{list}
\end{center}

The algorithm above produces the selected sample size $\hat{n}$, the selection of covariates
$\hat{\mathcal{I}}:= \hat{\mathcal{I}}_{\hat{n},k_{\hat{n}}}$ with $k_{\hat{n}}$
selected groups and $\hat{m} := m(\hat{n}):= |\hat{\mathcal{I}}_{\hat{n},k_{\hat{n}}}|$
selected regressors. Here, $\hat{\gamma}:=\hat{\gamma}_{\hat{n}}$ is the corresponding
coefficient vector on the selected regressors $Z$.

\begin{remark}
Theorem~\ref{thm: risk bound} in Appendix~A gives a finite-sample bound on the criterion function resulting from our OGA method and, thus, also for the MSE of the resulting treatment effect estimator. The natural target
for this residual variance is an infeasible residual variance when $\gamma_0$ is known \emph{a priori}.
Theorem~\ref{thm: risk bound} establishes conditions under which the difference between
the residual variance resulting from our method and the infeasible residual variance decreases at a rate of $1/k$ as
$k$ increases, where $k$ is the number of the steps in the OGA. It is known in a simpler
setting than ours that this rate $1/k$ cannot generally be improved \citep[see,
e.g.,][]{barron2008kk}. In this sense, we show that our proposed method has a desirable
property. See Appendix~A for further details.
\end{remark}

\begin{remark}
There are many important reasons for collecting covariates, such as checking whether
randomization was carried out properly and identifying heterogeneous treatment effects,
among others. If a few covariates are essential for the analysis, we can guarantee their
selection by including them in every group $G_k$, $k=1,\ldots,p$.
\end{remark}

\begin{remark}
	In a simple model such as the one in Appendix C, the optimal combination of covariates equalizes the percent marginal contribution of an additional variable to the residual variance with the percent marginal contribution of the additional variable to the costs per interview. 
	Step 2 of the OGA selects the next covariate as the one that has the highest predictive power independent of its cost. Outside a class of very simple models as in Appendix C, it is difficult to determine an OGA approximation to the optimum that jointly takes into account both predictive power as it requires comparison of all possible covariate combinations. In our empirical application of Section~V.B, we study a case with heterogeneous costs and propose a sensitivity analysis that assesses whether the OGA solution significantly changes with perturbations of the set of potential covariates.
\end{remark}


\section{The Costs of Data Collection}\label{sec:cost}

In this section, we discuss the specification of the cost function $c(S,n)$ that defines the
budget constraint of the researcher. In principle, it is possible to construct a matrix
containing the value of the costs of data collection for every possible combination of $S$ and
$n$ without assuming any particular form of relationship between the individual entries.
However, determination of the costs for every possible combination of $S$ and $n$ is a
cumbersome and, in practice, probably infeasible exercise. Therefore, we consider the
specification of cost functions that capture the costs of all stages of the data collection
process in a more parsimonious fashion.

We propose to decompose the overall costs of data collection into three components:
administration costs $c_{\rm admin}(S)$, training costs $c_{\rm train}(S,n)$, and interview
costs $c_{\rm interv}(S,n)$, so that
\begin{equation}\label{eq: cost decomposition}
	c( S ,n)= c_{\rm admin}(S) + c_{\rm train}(S,n) + c_{\rm interv}(S,n).
\end{equation}
In the remainder of this section, we discuss possible specifications of the three types of costs
by considering fixed and variable cost components corresponding to the different stages of
the data collection process. The exact functional form assumptions are based on the
researcher's knowledge about the operational details of the survey process. Even though this
section's general discussion is driven by our experience in the empirical applications of
Section~\ref{sec:app}, the operational details are likely to be similar for many surveys, so
we expect the following discussion to provide a useful starting point for other data
collection projects.

We start by specifying survey time costs. Let $\tau_{j}$, $j=1,\ldots,M$, be the costs of
collecting variable $j$ for one individual, measured in units of survey time. Similarly, let
$\tau_0$ denote the costs of collecting the outcome variable, measured in units of survey
time. Then, the total time costs of surveying one individual to elicit the variables indicated by
$S$ are
$$T(S) := \tau_0 + \sum_{j=1}^{M}\tau_{j}S_{j}.$$

\subsection{Administration and Training Costs}\label{sub-sec:fixed-cost}

A data collection process typically incurs costs due to administrative work and training prior
to the start of the actual survey. Examples of such tasks are developing the questionnaire
and the program for data entry, piloting the questionnaire, developing the manual for
administration of the survey, and organizing the training required for the enumerators.

Fixed costs, which depend neither on the size of the survey nor on the sample size of survey
participants, can simply be subtracted from the budget. We assume that $B$ is already net
of such fixed costs.

Most administrative and training costs tend to vary with the size of the questionnaire and the
number of survey participants. Administrative tasks such as development of the
questionnaire, data entry, and training protocols are independent of the number of survey
participants, but depend on the size of the questionnaire (measured by the number of
positive entries in $S$) as smaller questionnaires are less expensive to prepare than larger
ones. We model those costs by
\begin{equation}\label{eq: admin}
	c_{\rm admin}(S) := \phi  T(S)^{\alpha},
\end{equation}
where $\phi$ and $\alpha$ are scalars to be chosen by the researcher. We assume
$0<\alpha<1$, which means that marginal costs are positive but decline with survey size.

Training of the enumerators depends on the survey size, because a longer survey requires
more training, and on the number of survey participants, because surveying more individuals
usually requires more enumerators (which, in turn, may raise the costs of training),
especially when there are limits on the duration of the fieldwork. We therefore specify
training costs as
 \begin{equation}\label{eq: train}
 	c_{\rm train}(S,n) := \kappa(n) \, T(S),
 \end{equation}
where $\kappa(n)$ is some function of the number of survey participants.\footnote{It is of
course possible that $\kappa$ depends not only on $n$ but also on $T(S)$. We model it this
way for simplicity, and because it is a sensible choice in the applications we discuss below.}
Training costs are typically lumpy because, for example, there exists only a limited set of
room sizes one can rent for the training, so we model $\kappa( n) $ as a step function:
\[
\kappa\left( n\right) =\left\{
\begin{array}{ccc}
\overline{\kappa}_{1} & \text{if} & 0<n\leq \overline{n}_{1} \\
\overline{\kappa}_{2} & \text{if} & \overline{n}_{1}<n\leq \overline{n}_{2}\\
&\vdots &
\end{array}%
\right. . \]
Here, $\overline{\kappa}_{1}, \overline{\kappa}_{2}, \ldots$ is a sequence of scalars
describing the costs of sample sizes in the ranges defined by the cut-off sequence
$\overline{n}_{1}, \overline{n}_{2}, \ldots$.

\subsection{Interview Costs}

Enumerators are often paid by the number of interviews conducted, and the payment
increases with the size of the questionnaire. Let $\eta$ denote fixed costs per interview that
are independent of the size of the questionnaire and of the number of participants. These are
often due to travel costs and can account for a substantive fraction of the total interview
costs. Suppose the variable component of the interview costs is linear so that total interview
costs can be written as
\begin{equation}\label{eq: interview}
	c_{\rm interv}(S,n) := n\eta + np\,T(S),
\end{equation}
where $T(S)$ should now be interpreted as the average time spent per interview,  and $p$ is
the average price of one unit of survey time. We employ the specification \eqref{eq: cost
decomposition} with \eqref{eq: admin}--\eqref{eq: interview} when studying the impact of
free day-care on child development in Section~\ref{sec: daycare}.

\begin{remark}
Because we always collect the outcome variable, we incur the fixed costs $n\eta$ and the
variable costs $np\tau_0$ even when no covariates are collected.
\end{remark}

\begin{remark}
Non-financial costs are difficult to model, but could in principle be added. They are
primarily related to the impact of sample and survey size on data quality. For example, if we
design a survey that takes more than four hours to complete, the quality of the resulting data
is likely to be affected by interviewer and interviewee fatigue. Similarly, conducting the
training of enumerators becomes more difficult as the survey size grows. Hiring high-quality
enumerators may be particularly important in that case, which could result in even higher
costs (although this latter observation could be explicitly considered in our framework).
\end{remark}

\subsection{Clusters}

In many experiments, randomization is carried out at a cluster level (e.g., school level),
rather than at an individual level (e.g., student level). In this case, training costs may depend
not only on the ultimate sample size $n = c\, n_c$, where $c$ and $n_c$ denote the number
of clusters  and the number of participants per cluster, respectively, but on a particular
combination ($c,n_c$), because the number of required enumerators may be different for
different ($c,n_c$) combinations. Therefore, training costs (which now also depend on $c$
and $n_c$) may be modeled as
\begin{equation}\label{eq: train cluster}
	c_{\rm train}(S,n_c,c) := \kappa(c, n_c)\, T(S).
\end{equation}
The interaction of cluster and sample size in determining the number of required
enumerators and, thus, the quantity $\kappa(c,n_c)$, complicates the modeling of this
quantity relative to the case without clustering. Let $\mu(c,n_c)$ denote the number of
required survey enumerators for $c$ clusters of size $n_c$. As in the case without clustering, we assume that the
training costs is lumpy in the number of enumerators used:
\[
\kappa( c,n_c) :=\left\{
\begin{array}{ccc}
\overline{\kappa}_{1} & \text{if} & 0< \mu(c,n_c)\leq \overline{\mu}_{1} \\
\overline{\kappa}_{2} & \text{if} & \overline{\mu}_{1}< \mu(c,n_c)\leq \overline{\mu}_{2}\\
&\vdots &
\end{array}%
\right.. \]
The number of enumerators required, $\mu(c,n_c)$, may also be lumpy in the number of
interviewees per cluster, $n_c$, because there are bounds to how many
interviews each enumerator can carry out. Also, the number of enumerators needed for the
survey typically increases in the number of clusters in the experiment. Therefore, we model
$\mu(c,n_c)$ as
$$\mu(c,n_c) := \lfloor\mu_c(c)\cdot \mu_n(n_c)\rfloor,$$
where $\lfloor\cdot \rfloor$ denotes the integer part, $\mu_c(c) := \lambda c $ for some
constant $\lambda$ (i.e., $\mu_c(c)$ is assumed to be linear in $c$), and
$$\mu_n(n_c) :=  \left\{
\begin{array}{ccc}
\overline{\mu}_{n,1} & \text{if} & 0< n_c\leq \overline{n}_{1} \\
\overline{\mu}_{n,2} & \text{if} & \overline{n}_{1}< n_c\leq \overline{n}_{2}\\
&\vdots &
\end{array}%
\right..$$

In addition, while the variable interview costs component continues to depend on the overall
sample size $n$ as in \eqref{eq: interview}, the fixed part of the interview costs is determined
by the number of clusters $c$ rather than by $n$. Therefore, the total costs per interview
become
\begin{equation}\label{eq: interview cluster}
	c_{\rm interv}(S,n_c,c) := \psi(c) \eta + c n_c p\, T(S),
\end{equation}
where $\psi(c)$ is some function of the number of clusters $c$.

\subsection{Covariates with Heterogeneous Prices}

In randomized experiments, the data collection process often differs across blocks of
covariates. For example, the researcher may want to collect outcomes of psychological tests
for the members of the household that is visited. These tests may need to be administered by
trained psychologists, whereas administering a questionnaire about background variables
such as household income, number of children, or parental education, may not require any
particular set of skills or qualifications other than the training provided as part of the data
collection project.

Partition the covariates into two blocks, a high-cost block (e.g., outcomes of psychological
tests) and a low-cost block (e.g., standard questionnaire). Order the covariates such that the
first $M_{\rm low}$ covariates belong to the low-cost block, and the remaining $M_{\rm
high}:=M-M_{\rm low}$ together with the outcome variable belong to the high-cost block.
Let
$$T_{\rm low}(S):= \sum_{j=1}^{M_{\rm low}}\tau_{j}S_{j}\qquad \text{and}\qquad
T_{\rm high}(S):= \tau_0 + \sum_{j=M_{\rm low}+1}^{M}\tau_{j}S_{j}$$
be the total time costs per individual of surveying all low-cost and high-cost covariates,
respectively. Then, the total time costs for all variables can be written as $T(S) = T_{\rm
low}(S)+T_{\rm high}(S)$.

Because we require two types of enumerators, one for the high-cost covariates and one for
the low-cost covariates, the financial costs of each interview (fixed and variable) may be
different for the two blocks of covariates. Denote these by $\psi_{\rm low}(c,n_c) \eta_{\rm
low} + c n_c p_{\rm low} T_{\rm low}(S)$ and $\psi_{\rm high}(c,n_c) \eta_{\rm high} +
c n_c p_{\rm high} T_{\rm high}(S)$, respectively.

The fixed costs for the high-cost block are incurred regardless of whether high-cost
covariates are selected or not, because we always collect the outcome variable, which here
is assumed to belong to this block. The fixed costs for the low-cost block, however, are
incurred only when at least one low-cost covariate is selected (i.e., when
$\sum_{j=1}^{M_{\rm low}} S_j > 0$). Therefore, the total interview costs for all
covariates can be written as
\begin{multline}\label{eq: interview blocked}
	c_{\rm interv}(S,n) := \ind\Big\{ \sum_{j=1}^{M_{\rm low}} S_j > 0\Big\}
   (\psi_{\rm low}(c,n_c) \eta_{\rm low} + c n_c p_{\rm low} T_{\rm low}(S))\\
		 + \psi_{\rm high}(c,n_c) \eta_{\rm high} + c n_c p_{\rm high} T_{\rm high}(S).
\end{multline}
The administration and training costs can also be assumed to differ for the two types of
enumerators. In that case,
\begin{align}
	c_{\rm admin}(S) &:= \phi_{\rm low}T_{\rm low}(S)^{\alpha_{\rm low}} + \phi_{\rm high}T_{\rm high}(S)^{\alpha_{\rm high}},\label{eq: admin blocked}\\
	c_{\rm train}(S,n) &:= \kappa_{\rm low}(c, n_c)\, T_{\rm low}(S) + \kappa_{\rm high}(c, n_c)\, T_{\rm high}(S)\label{eq: train blocked}.
\end{align}
We employ specification \eqref{eq: cost decomposition} with \eqref{eq: interview
cluster}--\eqref{eq: train blocked} when, in Section~\ref{sec: school grants}, we study the
impact on student learning of cash grants which are provided to schools.

\section{Empirical Applications}\label{sec:app}

\subsection{Access to Free Day-Care in Rio}
\label{sec: daycare}

In this section, we re-examine the experimental design of \citet{Attanasio:2014sf}, who
evaluate the impact of access to free day-care on child development and household
resources in Rio de Janeiro. In their dataset, access to care in public day-care centers, most
of which are located in slums, is allocated through a lottery, administered to children in the
waiting lists for each day-care center.

Just before the 2008 school year, children applying for a slot at a public day-care center
were put on a waiting list. At this time, children were between the ages of 0 and 3. For each
center, when the demand for day-care slots in a given age range exceeded the supply, the
slots were allocated using a lottery (for that particular age range). The use of such an
allocation mechanism means that we can analyze this intervention as if it was an RCT, where
the offer of free day-care slots is randomly allocated across potentially eligible recipients.
\citet{Attanasio:2014sf} compare the outcomes of children and their families who were
awarded a day-care slot through the lottery, with the outcomes of those not awarded a slot.

The data for the study were collected mainly during the second half of 2012, four and a half
years after the randomization took place. Most children were between the ages of 5 and 8. A
survey was conducted, which had two components: a household questionnaire, administered
to the mother or guardian of the child; and a battery of health and child development
assessments, administered to children. Each household was visited by a team of two field
workers, one for each component of the survey.

The child assessments took a little less than one hour to administer, and included five tests
per child, plus the measurement of height and weight. The household survey took between
one and a half and two hours, and included about 190 items, in addition to a long module
asking about day-care history, and the administration of a vocabulary test to the main carer
of each child.

As we explain below, we use the original sample, with the full set of items collected in the
survey, to calibrate the cost function for this example. However, when solving the survey design problem described in this paper we consider only a subset of
items of these data, with the original budget being scaled down properly. This is done for
simplicity, so that we can essentially ignore the fact that some variables are missing for part
of the sample, either because some items are not applicable to everyone in the sample, or
because of item non-response. We organize the child assessments into three
indices: cognitive tests, executive function tests, and anthropometrics (height and weight).
These three indices are the main outcome variables in the analysis. However, we use
only the cognitive tests and anthropometrics indices in our analysis, as we have fewer
observations for executive function tests.

We consider only 40 covariates out of the
total set of items on the questionnaire. The variables not included can be arranged into four groups: (i)
variables that can be seen as final outcomes, such as questions about the development and
the behavior of the children in the household; (ii) variables that can be seen as intermediate
outcomes, such as labor supply, income, expenditure, and investments in children; (iii)
variables for which there is an unusually large number of missing values; and (iv) variables
that are either part of the day-care history module, or the vocabulary test for the child's
carer (because these could have been affected by the lottery assigning children to day-care
vacancies).  We then drop four of the 40 covariates chosen, because their variance is zero in
the sample. The remaining $M=36$ covariates are related to the respondent's age, literacy,
educational attainment, household size, safety, burglary at home, day care, neighborhood,
characteristics of the respondent's home and its surroundings (the number of rooms, garbage
collection service, water filter, stove, refrigerator, freezer, washer, TV, computer, Internet,
phone, car, type of roof, public light in the street, pavement, etc.). We drop individuals for
whom at least one value in each of these covariates is missing, which leads us to use a
subsample with 1,330 individuals from the original experimental sample, which included 1,466 individuals.

\paragraph{Calibration of the cost function.}
We specify the cost function \eqref{eq: cost decomposition} with components \eqref{eq:
admin}--\eqref{eq: interview} to model the data collection procedure as implemented in
\citet{Attanasio:2014sf}. We calibrate the parameters using the actual budgets for training,
administrative, and interview costs in the authors' implementation. The contracted total budget
of the data collection process was R\$665,000.\footnote{There were some adjustments to the
budget during the period of fieldwork.}

For the calibration of the cost function, we use the originally planned budget of
R\$665,000, and the original sample size of 1,466. As mentioned above, there were $190$
variables collected in the household survey, together with a day-care module and a
vocabulary test. In total, this translates into a total of roughly 240 variables.\footnote{The
budget is for the 240 variables (or so) actually collected. In spite of that, we only use 36 of
these as covariates in this paper, as the remaining variables in the survey were not so much
covariates as they were measuring other intermediate and final outcomes of the experiment,
as we have explained before. The actual budget used in solving the survey design problem is
scaled down to match the use of only 36 covariates.} 
Appendix~B provides a detailed description of all components of the calibrated cost function.

\paragraph{Implementation.}

In implementing the OGA, we take each single variable as a possible group (i.e., each group
consists of a singleton set). We studentized all covariates to have variance one. To compare
the OGA with alternative approaches, we also consider LASSO and POST-LASSO for the inner optimization problem in Step 2 of our procedure. The LASSO solves
	\begin{align}\label{eq:lasso}
	\min_{\gamma} \frac{1}{N}\sum_{i=1}^N \left(Y_i-\gamma'X_i\right)^2 + \lambda \sum_j |\gamma_j |
	\end{align} with a tuning parameter $\lambda > 0$. The POST-LASSO procedure runs an OLS regression of $Y_i$ on the selected covariates (non-zero entries of $\gamma$) in \eqref{eq:lasso}. \cite{belloni2013}, for example, provide a detailed description of the two algorithms. It is known that
LASSO yields  biased regression coefficient estimates and that POST-LASSO can mitigate
this bias problem. Together with the outer optimization over the sample size using the LASSO or POST-LASSO solutions in the inner loop may lead to different selections of covariate-sample size combinations. This is because POST-LASSO re-estimates the regression equation which may lead to more precise estimates of $\gamma$ and thus result in a different estimate for the MSE of the treatment effect estimator.

In both LASSO implementations, the penalization parameter $\lambda$ is chosen so as to
satisfy the budget constraint as close to equality as possible. We start
with a large value for $\lambda$, which leads to a large penalty for non-zero entries in $\gamma$, so that few or no covariates are selected and the budget constraint holds. Similarly, we consider a very small value for $\lambda$ which leads to the selection of many covariates and violation of the budget. Then, we use a bisection algorithm to find the $\lambda$-value in this interval
for which the budget is satisfied within some pre-specified tolerance.

\begin{table}[ht]
\caption{Day-care (outcome: cognitive test)}\label{tab: daycare test1 summary}
 \begin{center}
\begin{tabular}{lcccccc}
\hline\hline
Method     & $\hat{n}$ & $|\hat{I}|$ & Cost/B  & RMSE & EQB        & Relative EQB \\
\hline
Experiment & 1,330     & 36          & 1       & 0.025285                                             & R\$562,323 & 1 \\
OGA        & 2,677     & 1           & 0.9939  & 0.018776                                             & R\$312,363 & 0.555\\
LASSO      & 2,762     & 0           & 0.99475 & 0.018789                                             & R\$313,853 & 0.558\\
POST-LASSO & 2,677     & 1           & 0.9939  & 0.018719                                             & R\$312,363 & 0.555\\
\hline\hline
\end{tabular}
\end{center}
\end{table}

\begin{table}[ht]
\caption{Day-care (outcome: health assessment)}\label{tab: daycare test4 summary}
 \begin{center}
\begin{tabular}{lcccccc}
\hline\hline
Method     & $\hat{n}$ & $|\hat{I}|$ & Cost/B  & RMSE & EQB        & Relative EQB \\
\hline
Experiment & 1,330     & 36          & 1       & 0.025442                                             & R\$562,323 & 1\\
OGA        & 2,762     & 0           & 0.99475 & 0.018799                                             & R\$308,201 & 0.548\\
LASSO      & 2,762     & 0           & 0.99475 & 0.018799                                             & R\$308,201 & 0.548\\
POST-LASSO & 2,677     & 1           & 0.9939  & 0.018735                                             & R\$306,557 & 0.545\\
\hline\hline
\end{tabular}
\end{center}
\end{table}

\paragraph{Results.}
Tables~\ref{tab: daycare test1 summary} and \ref{tab: daycare test4 summary}
summarize the results of the covariate selection procedures. For the cognitive test
outcome, OGA and POST-LASSO select one covariate (``$|\hat{I}|$''),\footnote{For OGA, it is an
indicator variable whether the respondent has finished secondary education, which is an
important predictor of outcomes; for POST-LASSO, it is the number of rooms in the house,
which can be considered as a proxy for wealth of the household, and again, an important
predictor of outcomes.} whereas LASSO does not select any covariate. The selected sample
sizes (``$\hat{n}$'') are 2,677 for OGA and POST-LASSO, and 2,762 for LASSO, which are almost twice
as large as the actual sample size in the experiment. The performance of the three covariate selection methods in terms of the precision of the resulting treatment effect estimator is measured by the square-root value of the minimized MSE criterion function (``RMSE'') from Step~2 of our procedure. We focus on the MSE, but notice that gains in MSE translate into gains in the power of the corresponding t-test as discussed in Section~\ref{sec:problem}. The three methods perform similarly well and improve precision by about 25\% relative to the experiment. Also, all three methods manage
to exhaust the budget, as indicated by the cost-to-budget ratios (``Cost/B'') close
to one. We do not put any strong emphasis on the selected covariates as the improvement of
the criterion function is minimal relative to the case that no covariate is selected (i.e., the
selection with LASSO). The results for the health assessment outcome are very similar to
those of the cognitive test with POST-LASSO selecting one variable (the number of rooms
in the house), whereas OGA and LASSO do not select any covariate.

To assess the economic gain of having performed the covariate selection procedure after the
first wave, we include the column ``EQB'' (abbreviation of ``equivalent budget'') in
Tables~\ref{tab: daycare test1 summary} and \ref{tab: daycare test4 summary}. The first
entry of this column in Table~\ref{tab: daycare test1 summary} reports the budget
necessary for the selection of $\hat{n}=$ 1,330 and all covariates, as was carried out in the
experiment. For the three covariate selection procedures, the column shows the budget that
would have sufficed to achieve the same precision as the actual experiment in terms of the minimum value of the MSE criterion function in Step~2. For
example, for the cognitive test outcome, using the OGA to select the sample size and the
covariates, a budget of R\$312,363  would have sufficed to achieve the experimental
RMSE of $0.025285$. This is a huge reduction of costs by about 45 percent, as
shown in the last column called ``relative EQB''. Similar reductions in costs are possible
when using the LASSO procedures and also when considering the health assessment
outcome.

In Appendix~F, we perform an out-of-sample evaluation by splitting the dataset into training samples for the covariate selection step and evaluation samples for the computation of the performance measures RMSE and EQB. The results are very similar to those in Tables~\ref{tab: daycare test1 summary} and \ref{tab: daycare test4 summary}.

Appendix~D presents the results of Monte Carlo simulations that mimic this dataset, and
shows that all three methods select more covariates and smaller sample sizes as we increase
the predictive power of some covariates. This finding suggests that the covariates collected in the survey were not predicting the outcome very well and,
therefore, in the next wave the researcher should spend more of the available budget to
collect data on more individuals, with no (or only a minimal) household survey.
Alternatively,  the researcher may want to redesign the household survey to include
questions whose answers are likely better predictors of the outcome.

\subsection{Provision of School Grants in Senegal}
\label{sec: school grants}

In this subsection, we consider the study by \citet{Carneiro-et-al:14} who evaluate, using an
RCT, the impact of school grants on student learning in Senegal.
The authors collect original data not only on the treatment status of schools (treatment and
control) and on student learning, but also on a variety of household, principal, and teacher
characteristics that could potentially affect learning. 


The dataset contains two waves, a baseline and a follow-up, which we use for the study of two different hypothetical scenarios. In the first scenario, the researcher has access to a
pre-experimental dataset consisting of all outcomes and covariates collected in the baseline
survey of this experiment, but not the follow-up data. The researcher applies the covariate selection procedure to this
pre-experimental dataset to find the optimal sample size and set of covariates for the randomized control trial to be carried out after the first wave.
In the second scenario, in addition to the pre-experimental sample from the first wave the researcher now also has access to the post-experimental outcomes collected in the follow-up (second wave). In this second scenario, we treat the follow-up outcomes as the outcomes of interest and include baseline outcomes in the pool of covariates that predict follow-up outcomes.

As in the previous subsection, we calibrate the cost function based on the full dataset from the experiment, but for solving the survey design problem we focus on a subset of individuals and variables from the original questionnaire. For simplicity, we exclude all household
variables from the analysis, because they were only collected for 4 out of the
12 students tested in each school, and we remove covariates whose sample variance is equal
to zero. Again, for simplicity, of the four outcomes (math test, French test, oral test,
and receptive vocabulary) in the original experiment, we only consider the first one
(math test) as our outcome variable. We drop individuals for whom at least one answer in the
survey or the outcome variable is missing. This sample selection procedure leads to sample
sizes of $N=2,280$ for the baseline math test outcome. For the second scenario discussed above where we use also the follow-up outcome, the
sample size is  smaller ($N=762$) because of non-response in the follow-up outcome and because we restrict the sample to the control group of the follow-up. In the first scenario in which we predict the baseline outcome, dropping household variables reduces the original number of covariates in the survey from 255 to $M=142$. The remaining covariates are school- and teacher-level variables. In the second scenario in which we predict follow-up outcomes, we add the three baseline outcomes to the covariate pool, but at the same time remove two covariates because they have variance zero when restricted to the control group. Therefore, there are $M=143$ covariates in the second scenario.

\paragraph{Calibration of the cost function.}
We specify the cost function \eqref{eq: cost decomposition} with components \eqref{eq:
interview blocked}--\eqref{eq: train blocked} to model the data collection procedure as
implemented in \citet{Carneiro-et-al:14}. Each school forms a cluster. We calibrate the
parameters using the costs faced by the researchers and their actual budgets for training,
administrative, and interview costs. The total budget for one wave of data collection in this
experiment, excluding the costs of the household survey, was approximately \$192,200.

For the calibration of the cost function, we use the original sample size, the original number
of covariates in the survey (except those in the household survey), and the original number
of outcomes collected at baseline. The three baseline outcomes were much more expensive
to collect than the remaining covariates. In the second scenario, we therefore group the former together as high-cost
variables, and all remaining covariates as low-cost variables. Appendix~B provides a detailed description of all components of the calibrated cost function.

\paragraph{Implementation.}
The implementation of the covariate selection procedures is identical to the one in the
previous subsection except that   we consider here two different specifications of the
pre-experimental sample $\mathcal{S}_{\rm pre}$, depending on whether the outcome of
interest is the baseline or follow-up outcome.

\paragraph{Results.}

\begin{table}[!ht]
\caption{School grants (outcome: math test)} \label{tab: schoolgrants testm summary}
 \begin{center}
\begin{tabular}{lcccccc}
\hline\hline
Method     & $\hat{n}$ & $|\hat{I}|$ & Cost/B  & RMSE & EQB        & Relative EQB\\
\hline\\
\multicolumn{7}{c}{(a) Baseline outcome}\\[3pt]
experiment & 2,280 & 142 & 1       & 0.0042272 & \$30,767 & 1\\
OGA        & 3,018 & 14  & 0.99966 & 0.003916  & \$28,141 & 0.91\\
LASSO      & 2,985 & 18  & 0.99968 & 0.0039727 & \$28,669 & 0.93\\
POST-LASSO & 2,985 & 18  & 0.99968 & 0.0038931 & \$27,990 & 0.91\\ [6pt]

\multicolumn{7}{c}{(b) Follow-up outcome}\\ [3pt]
experiment & 762  & 143 & 1       & 0.0051298 & \$52,604 & 1\\
OGA        & 6,755 & 0   & 0.99961 & 0.0027047 & \$22,761 & 0.43\\
LASSO      & 6,755 & 0   & 0.99961 & 0.0027047 & \$22,761 & 0.43\\
POST-LASSO & 6,755 & 0   & 0.99961 & 0.0027047 & \$22,761 & 0.43\\ [6pt]

\multicolumn{7}{c}{(c) Follow-up outcome, no high-cost covariates}\\[3pt]
experiment & 762  & 143 & 1       & 0.0051298 & \$52,604 & 1\\
OGA        & 5,411 & 140 & 0.99879 & 0.0024969 & \$21,740 & 0.41\\
LASSO      & 5,444 & 136 & 0.99908 & 0.00249   & \$22,082 & 0.42\\
POST-LASSO & 6,197 & 43  & 0.99933 & 0.0024624 & \$21,636 & 0.41\\ [6pt]

\multicolumn{7}{c}{(d) Follow-up outcome, force baseline outcome}\\[3pt]
experiment & 762  & 143 & 1       & 0.0051298 & \$52,604 & 1\\
OGA        & 1,314 & 133 & 0.99963 & 0.0040293 & \$41,256 & 0.78\\
LASSO      & 2,789 & 1   & 0.9929  & 0.0043604 & \$42,815 & 0.81\\
POST-LASSO & 2,789 & 1   & 0.9929  & 0.0032823 & \$32,190 & 0.61\\
\hline\hline
\end{tabular}
\end{center}
\end{table}

Table~\ref{tab: schoolgrants testm summary} summarizes the results of the covariate
selection procedures. Panel (a) shows the results of the first scenario in which the baseline
math test is used as the outcome variable to be predicted. Panel (b) shows the
corresponding results for the second scenario in which the baseline outcomes are treated as
high-cost covariates and the follow-up math test is used as the outcome to be predicted.

For the baseline outcome in panel (a), the OGA selects only $|\hat{I}|=14$ out of the
$145$ covariates with a selected sample size of $\hat{n}=3,018$, which is about 32\% larger
than the actual sample size in the experiment. The results for the LASSO and POST-LASSO
methods are similar. As in the previous subsection, we measure the performance of the three covariate selection methods by the estimated precision of the resulting treatment effect estimator (``RMSE''). As in the previous section focus on the MSE, but notice that gains in MSE translate into gains in the power of the corresponding t-test as discussed in Section~\ref{sec:problem}. The three methods improve the precision by about 7\% relative to the experiment. Also, all three methods manage to essentially
exhaust the budget, as indicated by cost-to-budget ratios (``Cost/B'') close to one. As in the previous subsection, we measure the economic gains from using the covariate selection procedures by the equivalent budget (``EQB'') that each of the method requires to achieve the precision of the experiment. All three methods require equivalent budgets that are 7-9\% lower than that of the experiment.

All variables that the OGA selects as strong predictors of baseline outcome are plausibly related
to student performance on a math test:\footnote{Appendix E shows the
full list and definitions of selected covariates for the baseline outcome.} They are related to
important aspects of the community surrounding the school (e.g., distance to the nearest
city), school equipment (e.g., number of computers), school infrastructure (e.g., number of
temporary structures), human resources (e.g., teacher--student ratio, teacher training), and
teacher and principal perceptions about which factors are central for success in the school
and about which factors are the most important obstacles to school success. 

For the follow-up outcome in panel (b), the budget used in the experiment increases due to the addition of the three expensive baseline outcomes to the pool of covariates.
All three methods select no covariates and exhaust the budget by using the maximum feasible sample size of 6,755, which is almost nine times larger than the sample size in the experiment. The implied precision of the treatment effect estimator improves by about 47\% relative to the experiment, which translates into the covariate selection methods requiring less than half of the experimental budget to achieve the same precision as in the experiment. These are substantial statistical and economic gains from using our proposed procedure.

\paragraph{Sensitivity Checks.} In RCT's, baseline outcomes tend to be strong predictors of the follow-up outcome. One may therefore be concerned that, because the OGA first selects the most predictive covariates which in this application are also much more expensive than the remaining low-cost covariates, the algorithm never examines what would happen to the estimator's MSE if it first selects the most predictive low-cost covariates instead. In principle, such selection could lead to a lower MSE than any selection that includes the very expensive baseline outcomes. As a sensitivity check we therefore perform the covariate selection procedures on the pool of covariates that excludes the three baseline outcomes. Panel (c) shows the corresponding results. In this case, all methods indeed select more covariates and smaller sample sizes than in panel (b), and achieve a slightly smaller MSE. The budget reductions relative to the experiment as measured by EQB are also almost identical to those in panel (b). Therefore, both selections of either no covariates and large sample size (panel (b)) and many low-cost covariates with somewhat smaller sample size (panel (c)) yield very similar and significant improvements in precision or significant reductions in the experimental budget, respectively.\footnote{Note that there is no sense in which need to be concerned about identification of the minimizing set of covariates. There may indeed exist several combinations of covariates that yield similar precision of the resulting treatment effect estimator. Our objective is highest possible precision without any direct interest in the identities of the covariates that achieve that minimum.}

As discussed in Section~\ref{subsec: discussion}, one may want to ensure balance of the control and treatment group, especially in terms of strong predictors such as baseline outcomes. Checking balance requires collection of the relevant covariates. Therefore, we also perform the three covariate selection procedures when we force each of them to include the baseline math outcome as a covariate. In the OGA, we can force the selection of a covariate by performing group OGA as described in Section~\ref{sec:algorithm}, where each group contains a low-cost covariate together with the baseline math outcome. For the LASSO procedures, we simply perform the LASSO algorithms after partialing out the baseline math outcome from the follow-up outcome.
The corresponding results are reported in panel (d). Since baseline outcomes are very expensive covariates, the selected sample sizes relative to those in panels (b) and (c) are much smaller. OGA selects a sample size of 1,314 which is almost twice as large as the experimental sample size, but about 4-5 times smaller than the OGA selections in panels (b) and (c). In contrast to OGA, the two LASSO procedures do not select any other covariates beyond the baseline math outcome. As a result of forcing the selection of the baseline outcome, all three methods achieve an improvement in precision, or reduction of budgets respectively, of around 20\% relative to the experiment. These are still substantial gains, but the requirement of checking balance on the expensive baseline outcome comes at the cost of smaller improvements in precision due to our procedure.

In Appendix~F, we also perform an out-of-sample evaluation for this application by splitting the dataset into training samples for the covariate selection step and evaluation samples for the computation of the performance measures RMSE and EQB. The results are qualitatively similar to those in Table~\ref{tab: schoolgrants testm summary}.

\section{Relation to the Existing Literature}\label{sec:literature}

In this section, we discuss related papers in the literature. We emphasize that the research
question in our paper is different from those studied in the literature and that our paper is a
complement to the existing work.

In the context of experimental economics, \citet{List-et-al:11} suggest several simple
rules of thumb that researchers can apply to improve the efficiency of their experimental
designs. They discuss the issue of experimental costs and estimation efficiency but did not
consider the problem of selecting covariates.

\citet{hahn2011we} consider the design of a two-stage experiment for estimating an
average treatment effect, and proposed to select the propensity score that minimizes the
asymptotic variance bound for estimating the average treatment effect. Their
recommendation is to assign individuals randomly between the treatment and control groups
in the second stage, according to the optimized propensity score. They use the covariate
information collected in the first stage  to compute the optimized propensity score.

\citet{Bhattacharya:Dupas:2012} consider the problem of allocating a binary treatment
under a budget constraint. Their budget constraint limits what fraction of the population can
be treated, and hence is different from our budget constraint. They discuss the costs of
using a large number of covariates in the context of treatment assignment.

\citet{McKenzie2012} demonstrates that taking multiple measurements of the outcomes
after an experiment can improve power under the budget constraint. His choice problem is
how to allocate a fixed budget over multiple surveys between a baseline and follow-ups.
The main source of the improvement in his case comes from taking repeated measures of
outcomes; see \citet{Frison:Pocock:1992} for this point in the context of clinical trials. In
the set-up of \citet{McKenzie2012}, a baseline survey measuring the outcome is especially
useful when there is high autocorrelation in outcomes. This would be analogous in our
paper to devoting part of the budget to the collection of a baseline covariate, which is highly
correlated with the outcome (in this case, the baseline value of the outcome), instead of just
selecting a post-treatment sample size that is as large as the budget allows for. In this way,
\citet{McKenzie2012} is perhaps closest to our paper in spirit.

In a very recent paper, \citet{Dominitz:Manski:16} proposed the use of statistical decision
theory to study allocation of a predetermined budget between two sampling processes of
outcomes: a high-cost process of good data quality and a low-cost process with
non-response or low-resolution interval measurement of outcomes.
Their main concern is data quality between two sampling processes and is distinct from our main focus, namely the simultaneous selection of the set of covariates and the sample size.

\section{Concluding Remarks}\label{sec: conclusion}

We develop data-driven methods for designing a survey in a randomized experiment
using information from a pre-existing dataset. Our procedure is optimal in a sense that it minimizes the mean squared error
of the average treatment effect estimator and maximizes the power of the corresponding t-test, and can handle a large number of potential covariates as
well as complex budget constraints faced by the researcher. We have illustrated the
usefulness of our approach by showing substantial improvements in precision of the resulting estimator or substantial reductions in the researcher's budget in two empirical applications.

We recognize that there are several other potential reasons guiding the choice of covariates in a survey. These may be as important as the one we focus on, which is the precision of the treatment effect estimator. We show that it is possible and important to develop practical tools to help researchers make such decisions. We regard our paper as part of the broader task of making the research design process more rigorous and transparent.

Some important issues remain as interesting future research topics. 
For example, we have assumed that the pre-experimental sample $\mathcal{S}_{\rm pre}$ is large, and therefore the
difference between the minimization of the sample average and that of the population
expectation is negligible. However, if the sample size of $\mathcal{S}_{\rm pre}$ is small
(e.g., in a pilot study), one may be concerned about over-fitting, in the sense of selecting too
many covariates. A straightforward solution would be to add a term to the objective
function that penalizes a large number of covariates via some information criteria (e.g., the Akaike information criterion (AIC) or the Bayesian information criterion (BIC)).

{\singlespacing

\ifx\undefined\BySame
\newcommand{\BySame}{\leavevmode\rule[.5ex]{3em}{.5pt}\ }
\fi \ifx\undefined\textsc
\newcommand{\textsc}[1]{{\sc #1}}
\newcommand{\emph}[1]{{\em #1\/}}
\let\tmpsmall\small
\renewcommand{\small}{\tmpsmall\sc}
\fi

}



\clearpage

\renewcommand\thepage{A-\arabic{page}} \setcounter{page}{1}


\appendix

\section*{Appendix A: Large-Budget Properties of the Algorithm}\label{sec:properties}
\renewcommand{\thesection}{A}

\renewcommand{\thetheorem}{A.\arabic{theorem}}
\setcounter{theorem}{0}
\renewcommand{\theremark}{A.\arabic{remark}}
\setcounter{remark}{0}
\renewcommand{\theequation}{A.\arabic{equation}}
      \setcounter{equation}{0}

In this appendix, we provide non-asymptotic bounds on the empirical risk of the OGA
approximation $\hat{\mathbf{f}} := \mathbf{Z}\hat{\gamma}$. Following
\citet{barron2008kk}, we define
 $$\|f\|_{\mL_1^N} := \inf \Big\{ \sum_{k=1}^p |\beta_k|_2:\, \beta_k \in\mathbb{R}^{|G_{k}|}\text{ and }
 f=\sum_{k=1}^p X_{G_{k}}'\beta_k \Big\}. $$
When the expression $f=\sum_{k=1}^p X_{G_{k}}'\beta_k$ is not unique, we take the true
$f$ to be one with the minimum value of $\|f\|_{\mL_1^N}$. This gives $f:= \gamma_0'X$
and $\mathbf{f}:= \mathbf{X} \gamma_0$ for some $\gamma_0$. Note that $\mathbf{f}$ is
defined by $\mathbf{X}$ with the true parameter value $\gamma_0$, while
$\hat{\mathbf{f}}$ is an OGA estimator of $\mathbf{f}$ using only $\mathbf{Z}$. The
following theorem bounds the finite sample approximation to the MSE of the treatment
effect estimator
$$\widehat{MSE}_{\hat{n},N}(\hat{\mathbf{f}}) := \| \bY - \hat{\mathbf{f}}\|_N^2/\hat{n},$$
which is equal to the objective function in \eqref{eq: sample problem}. Note that
$\widehat{MSE}_{\hat{n},N}(\hat{\mathbf{f}})$ can also be called the ``empirical risk''.

The following theorem is a modification of Theorem~2.3 of \citet{barron2008kk}. Our
result is different from \citet{barron2008kk} in two respects: (i) we pay explicit attention to
the group structure, and (ii) our budget constraint is different from their termination rule.

\begin{theorem}\label{thm: risk bound}
Assume that $(\bX_{G_j}' \bX_{G_j})/N = \mathbf{I}_{|G_j|}$ for each $j=1,\dots,p$.
Suppose $\mathcal{N}$ is a finite subset of $\mathbb{N}_+$, $c:\{0,1\}^M\times
\mathbb{N}_+\to \mathbb{R}$ some function, and $B>0$ some constant. Then the
following bound holds:
	\begin{equation}
		\widehat{MSE}_{\hat{n},N}(\hat{\mathbf{f}}) - \widehat{MSE}_{\hat{n},N}(\mathbf{f}) \leq \frac{4 \|f\|^2_{\mL_1^N}}{\hat{n}}  \left(\frac{1}{\min\{p, k_{\hat{n}} \}} \right).
	\end{equation}
\end{theorem}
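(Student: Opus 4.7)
The plan is to adapt the classical greedy-approximation argument of \citet{barron2008kk} to our group-OGA setting, taking the empirical squared error in approximating $\bY$ (rather than an idealised target) as the quantity to be reduced. Because the selected sample size $\hat{n}$ appears only as a divisor and the inner-loop bound is derived separately for each $n\in\mathcal{N}$, the outer loop plays no essential role: I will prove a per-$n$ bound on $e_k := \|\bY - \hat{\mathbf{f}}_{n,k}\|_N^2 - \|\bY - \mathbf{f}\|_N^2$ (with $\hat{\mathbf{f}}_{n,k} := \bX_{\hat{\mathcal{I}}_{n,k}}\hat{\gamma}_{n,k}$) and then divide by $\hat{n}$ at the end.

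The argument rests on three ingredients. First, the within-group orthonormality $(\bX_{G_j}'\bX_{G_j})/N=\mathbf{I}$, together with the OLS refit in step 2(c), gives
$$\|\hat{\br}_{n,k+1}\|_N^2 \leq \|\hat{\br}_{n,k} - \bX_{G_{\hat{j}_{n,k}}}\beta\|_N^2\quad\text{for every }\beta,$$
and optimising over $\beta$ yields the per-step reduction $\|\hat{\br}_{n,k+1}\|_N^2 \leq \|\hat{\br}_{n,k}\|_N^2 - |\bX_{G_{\hat{j}_{n,k}}}'\hat{\br}_{n,k}|_2^2/N^2$. Second, the OLS normal equations force $\bX_{G_j}'\hat{\br}_{n,k}=0$ for every already-selected group, so the greedy rule in step 2(b) actually maximises over all $j\in\{1,\ldots,p\}$. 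Decomposing $\mathbf{f}=\sum_j \bX_{G_j}\beta_j$ with $\sum_j|\beta_j|_2 = \|f\|_{\mL_1^N}$ and applying Cauchy--Schwarz group by group yields $|\hat{\br}_{n,k}'\mathbf{f}| \leq \|f\|_{\mL_1^N}\,|\bX_{G_{\hat{j}_{n,k}}}'\hat{\br}_{n,k}|_2$. Third, writing $\hat{\br}_{n,k} = (\bY-\mathbf{f})+(\mathbf{f}-\hat{\mathbf{f}}_{n,k})$ and using $\hat{\br}_{n,k}'\hat{\mathbf{f}}_{n,k}=0$ produces the identity
$$\hat{\br}_{n,k}'\mathbf{f}/N = \tfrac{1}{2}e_k + \tfrac{1}{2}\|\mathbf{f}-\hat{\mathbf{f}}_{n,k}\|_N^2,$$
so that $|\hat{\br}_{n,k}'\mathbf{f}|/N \geq e_k/2$ whenever $e_k\geq 0$. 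Chaining the three ingredients gives the recursion $e_{k+1} \leq e_k - e_k^2/(4\|f\|_{\mL_1^N}^2)$.

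A standard induction on $k$ then yields $e_k \leq 4\|f\|_{\mL_1^N}^2/k$ for every $k\geq 1$: with $M:=\|f\|_{\mL_1^N}$, the base case uses that $x\mapsto x-x^2/(4M^2)$ is maximised at $x=2M^2$ with value $M^2$ so $e_1\leq M^2$, and the inductive step exploits the monotonicity of this map on $[0,2M^2]$. If $e_k\leq 0$ at some $k$ the bound is trivial, since the first ingredient makes $e_k$ non-increasing. Dividing by $\hat{n}$ and observing that the OGA cannot select more than $p$ distinct groups, so that $k_{\hat{n}} = \min\{p, k_{\hat{n}}\}$, completes the proof. The main technical subtlety is the identity linking $\hat{\br}_{n,k}'\mathbf{f}/N$ to $e_k$: it requires careful use of the OLS orthogonality and the observation that $\mathbf{f}$ need not lie in the span of the currently selected groups. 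The group structure itself is handled seamlessly by within-group orthonormality, which permits replacement of scalar dictionary inner products with group-wise $\ell_2$-norms.
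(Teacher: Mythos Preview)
Your proof is correct and follows essentially the same route as the paper's: both adapt the \citet{barron2008kk} argument to the group setting, establish the recursion $e_{k}\le e_{k-1}\bigl(1-e_{k-1}/(4\|f\|_{\mL_1^N}^2)\bigr)$ via the orthonormal projection drop plus a lower bound on the greedy inner product, and conclude by the standard induction. The only cosmetic difference is that you obtain the lower bound $e_k/2 \le \hat{\br}_{n,k}'\mathbf{f}/N$ through the exact polarization identity $\hat{\br}_{n,k}'\mathbf{f}/N=\tfrac12 e_k+\tfrac12\|\mathbf f-\hat{\mathbf f}_{n,k}\|_N^2$, whereas the paper reaches the equivalent inequality by expanding $\|\hat{\br}_{k-1}\|_N^2=N^{-1}\hat{\br}_{k-1}'\bY$ and applying Cauchy--Schwarz and AM--GM to the cross term with $\bY-\mathbf f$; the resulting recursion and induction are identical.
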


The theorem provides a non-asymptotic bound on the empirical risk of the OGA
approximation, but the bound also immediately yields asymptotic consistency in the
following sense. Suppose $\|f\|_{\mL_1^N}<\infty$ (Remark~\ref{rem: finite L1n-norm}
discusses this condition). Then, the empirical risk of $\hat{\mathbf{f}}$ is asymptotically
equivalent to that of the true predictor $\mathbf{f}$ either if the selected sample size
$\hat{n}\to\infty$ or if both the total number of groups $p$ and the number of selected
groups $k_{\hat{n}}$ diverge to infinity. Consider, for example, the simple case in which,
for a given sample size $n$, data collection on every covariate incurs the same costs (i.e.,
$\tilde{c}(n)$) and each group consists of a single covariate. Then the total data collection
costs are equal to the number of covariates selected multiplied by $\tilde{c}(n)$ (i.e., $c(S,n)
= \tilde{c}(n) \sum_{j=1}^M S_j$). Assuming that $\tilde{c}(n)$ is non-decreasing in $n$,
we then have
$$
\frac{1}{\hat{n} \min\{p, k_{\hat{n}} \}} =
\frac{1}{\hat{n} \min\{M,\hat{m}\}} = \frac{1}{\hat{n} \min\left\{M, \lfloor B/\tilde{c}(\hat{n}) \rfloor \right\}}, $$
where $\lfloor x \rfloor$ denotes the largest integer smaller than $x$. Therefore, we obtain
consistency if $\hat{n}M\to_p \infty$ and $\hat{n}B / \tilde{c}(\hat{n}) \to_p \infty$.
Continue to assume that $\tilde{c}(n)$ is increasing in $n$ and $\mathcal{N}$ contains
sample sizes bounded away from zero. Then, both rate conditions are satisfied, for example,
as the budget increases, $B\to\infty$, and the costs per covariate does not increase faster than
linearly in the sample size.\footnote{In fact, the costs could be allowed to increase with $n$
at any rate as long as $B\to\infty$ at a faster rate, so that we have $\hat{n}B /
\tilde{c}(\hat{n}) \to_p \infty$.} Note that consistency can hold irrespectively of whether
the number of covariates $M$ is finite or infinite.

\begin{remark}\label{rem: finite L1n-norm}
The condition $\|f\|_{\mL_1^N}<\infty$ is trivially satisfied when $p$ is finite. In the case $p
\to \infty$,  the condition $\|f\|_{\mL_1^N}<\infty$ requires that not all groups of covariates
are equally important in the sense that the coefficients $\beta_k$, when their $\ell_2$ norms
are sorted in decreasing order, need to converge to zero fast enough to guarantee that
$\sum_{k=1}^\infty |\beta_k|_2 < \infty$.
\end{remark}

\begin{remark}
If suitable laws of large numbers apply, we can also replace the condition
$\|f\|_{\mL_1^N}<\infty$ by its population counterpart.
\end{remark}

\begin{remark}
The minimal sample size $n_0$ in $\mathcal{N}$ could, for example, be determined by
power calculations \citep[see, e.g.][]{duflo2007, McConnell2015} that guarantee a certain
power level for an hypothesis test of $\beta=0$.
\end{remark}

\subsection*{Proofs}

\begin{proof}[Proof of Lemma \ref{lem: equiv MSE}]
Let $U_i(\gamma) := Y_i - \gamma' X_i - \beta_0 D_i$.
The homoskedastic error assumption implies that conditional on $D_1,\ldots,D_n$, the estimator $\hat{\beta}(\gamma,n)$ is unbiased and thus the
finite-sample MSE of $\hat{\beta}(\gamma)$ is
\begin{eqnarray*}
&&\text{Var}\left(\hat{\beta} (\gamma, n) \given D_1,\ldots,D_n\right) \\
&&\quad= \frac{1}{n} \text{Var} \left(Y_i - {\gamma}'X_i\given D_i=0\right)
\Big\{  \Big( n^{-1} \sum_{i=1}^n D_i \Big) \Big(1 - n^{-1} \sum_{i=1}^n D_i \Big) \Big\}^{-1},
\end{eqnarray*}
which equals the desired expression because $\text{Var} (Y_i - {\gamma}'X_i | D_i=0) = \text{Var} (Y_i - {\gamma}'X_i | D_i=1) = \sigma^2(\gamma)$.
\end{proof}

\begin{proof}[Proof of Lemma~\ref{lem: equiv power}]
	First, notice that as in Lemma~\ref{lem: equiv MSE},
	$$\hat{\beta}(\gamma,n) - \beta_0 \;|\; (D_1,\ldots,D_n) \;\sim\; N\left(0, \frac{\sigma^2(\gamma)}{n \bar{D}_n(1-\bar{D}_n)}   \right). $$
	Therefore:
	\begin{align*}
		&P_{\beta}\left(\hat{t}(\gamma,n) > z \given D_1,\ldots,D_n\right)\\
		&\qquad= P_{\beta}\left(\frac{ \hat{\beta}(\gamma,n)}{\sigma(\gamma)/ \sqrt{n\bar{D}_n(1-\bar{D}_n)}} > z \given D_1,\ldots,D_n\right)\\
			&\qquad= P_{\beta}\left(\frac{ \hat{\beta}(\gamma,n)-\beta}{\sigma(\gamma) / \sqrt{n\bar{D}_n(1-\bar{D}_n)}} > z -\frac{ \beta}{\sigma(\gamma) / \sqrt{n\bar{D}_n(1-\bar{D}_n)}} \given D_1,\ldots,D_n\right)\\
			&\qquad= \Phi\left( \frac{ \beta}{\sigma(\gamma) / \sqrt{n\bar{D}_n(1-\bar{D}_n)}} -z\right)
	\end{align*}
	The power of the two-sided test then is
	\begin{align*}
		&P_{\beta}\left(\left|\hat{t}(\gamma,n)\right| > c_{1-\alpha/2} \given D_1,\ldots,D_n\right)\\
			&\qquad= \Phi\left( \frac{ \beta}{\sigma(\gamma) / \sqrt{n\bar{D}_n(1-\bar{D}_n)}} -c_{1-\alpha/2}\right) + 1 - \Phi\left( \frac{ \beta}{\sigma(\gamma) / \sqrt{n\bar{D}_n(1-\bar{D}_n)}} +c_{1-\alpha/2}\right)
	\end{align*}
	Differentiating this expression with respect to $\sigma(\gamma)$ yields
	\begin{align*}
		&\frac{\partial}{\partial \sigma(\gamma)} P_{\beta}\left(\left|\hat{t}(\gamma,n)\right| > c_{1-\alpha/2} \given D_1,\ldots,D_n\right)\\
			&\qquad= \phi\left( \frac{ \beta}{\sigma(\gamma) / \sqrt{n\bar{D}_n(1-\bar{D}_n)}} -c_{1-\alpha/2}\right) \frac{ -\beta}{\sigma^2(\gamma) / \sqrt{n\bar{D}_n(1-\bar{D}_n)}}\\
			&\qquad\quad - \phi\left( \frac{ \beta}{\sigma(\gamma) / \sqrt{n\bar{D}_n(1-\bar{D}_n)}} +c_{1-\alpha/2}\right) \frac{ -\beta}{\sigma^2(\gamma) / \sqrt{n\bar{D}_n(1-\bar{D}_n)}}\\
			&\qquad= \left[ \phi\left( \frac{ \beta}{\sigma(\gamma) / \sqrt{n\bar{D}_n(1-\bar{D}_n)}} +c_{1-\alpha/2}\right) - \phi\left( \frac{ \beta}{\sigma(\gamma) / \sqrt{n\bar{D}_n(1-\bar{D}_n)}} -c_{1-\alpha/2}\right)\right]\\
			&\qquad\quad \times \frac{ \beta}{\sigma^2(\gamma) / \sqrt{n\bar{D}_n(1-\bar{D}_n)}}\\
			&\qquad\leq 0
	\end{align*}
	where the inequality follows because the expression in the square brackets has the same sign as $-\beta$.
\end{proof}

\begin{proof}[Proof of Theorem~\ref{thm: risk bound}]
This theorem can be proved by arguments similar to those used in the proof of Theorem~2.3
in \citet{barron2008kk}. In the subsequent arguments, we fix $n$ and leave indexing by $n$
implicit.

First, letting $\hat{\br}_{k-1,i}$ denote the $i$th component of $\hat{\br}_{k-1}$, we have
	\begin{eqnarray*}
		\| \hat{\br}_{k-1} \|_N^2 &=& N^{-1} \sum_{i=1}^N \hat{\br}_{k-1,i} Y_i\\
				&=& N^{-1} \sum_{i=1}^N \hat{\br}_{k-1,i} U_i + N^{-1} \sum_{i=1}^N \hat{\br}_{k-1,i} \sum_{j=1}^\infty X_{G_{j},i}'\beta_j\\
				&\leq& \| \hat{\br}_{k-1} \|_{N} \Big\| \mathbf{Y} -  \sum_{k=1}^\infty \bX_{G_{k}}'\beta_k \Big\|_{N}
				+ \Big[ \sum_{j=1}^\infty | \beta_j |_2 \Big]  N^{-1} | \hat{\br}_{k-1}' \bX_{G_k} |_2 \\
				&\leq& \frac{1}{2} \Big( \| \hat{\br}_{k-1} \|_{N}^2 + \Big\| \mathbf{Y} -  \sum_{k=1}^\infty \bX_{G_{k}}'\beta_k \Big\|_{N}^2 \Big)
				+ \Big[ \sum_{j=1}^\infty | \beta_j |_2 \Big]  N^{-1} | \hat{\br}_{k-1}' \bX_{G_k} |_2,
	\end{eqnarray*}
	which implies that
	\begin{equation}\label{eq2-new}
	\| \hat{\br}_{k-1} \|_N^2
	- \Big\| \mathbf{Y} -  \sum_{k=1}^\infty \bX_{G_{k}}'\beta_k \Big\|_{N}^2
	\leq 	2\Big[ \sum_{j=1}^\infty | \beta_j |_2 \Big]  N^{-1} | \hat{\br}_{k-1}' \bX_{G_k} |_2.
	\end{equation}
	Note that if the left-hand side of \eqref{eq2-new} is negative for some $k=k_0$, then
	the conclusion of the theorem follows immediately for all $m \geq k_0 - 1$.
	Hence, we assume that  the left-hand side of \eqref{eq2-new} is positive, implying that
	\begin{equation}\label{eq2-new-1}
	\Big( \| \hat{\br}_{k-1} \|_N^2
	- \Big\| \mathbf{Y} -  \sum_{k=1}^\infty \bX_{G_{k}}'\beta_k \Big\|_{N}^2 \Big)^2
	\leq	4\Big[ \sum_{j=1}^\infty | \beta_j |_2 \Big]^2  N^{-2} | \hat{\br}_{k-1}' \bX_{G_k} |_2^2.
	\end{equation}

Let $P_k$ denote the projection matrix $P_k := \bX_{G_k} (\bX_{G_k}' \bX_{G_k})^{-1}
\bX_{G_k}' =N^{-1} \bX_{G_k}  \bX_{G_k}' $, where the second equality comes from the
assumption that $(\bX_{G_k}' \bX_{G_k})/N = \mathbf{I}_{|G_k|}$. Hence, it follows
from the fact that $P_k$ is the projection matrix that
	\begin{equation}\label{step01}
	\| \hat{\br}_{k-1} - P_k \hat{\br}_{k-1} \|_N^2 = \| \hat{\br}_{k-1}  \|_N^2 - \|  P_k \hat{\br}_{k-1} \|_N^2.
	\end{equation}
	Because $\hat{\br}_{k}$ is the best approximation to $\bY$ from $\hat{\mathcal{I}}_{n,k}$, we have
	\begin{equation}\label{step02}
	\| \hat{\br}_{k}  \|_N^2 \leq \| \hat{\br}_{k-1} - P_k \hat{\br}_{k-1} \|_N^2.
	\end{equation}
	Combining \eqref{step02} with \eqref{step01} and using the fact that $P_k^2 = P_k$, we have
	\begin{eqnarray}\label{eq1}
	\| \hat{\br}_{k} \|^2_N &\leq& \| \hat{\br}_{k-1} \|^2_N  -  \|  P_k \hat{\br}_{k-1} \|^2_N \nonumber\\
	&=& \| \hat{\br}_{k-1} \|^2_N  -  \|  N^{-1} \bX_{G_k}  \bX_{G_k}' \hat{\br}_{k-1} \|^2_N \nonumber\\
	&=& \| \hat{\br}_{k-1} \|^2_N  -  N^{-2} | \hat{\br}_{k-1}' \bX_{G_k} |_2^2,
	\end{eqnarray}

Now, combining \eqref{eq1} and \eqref{eq2-new-1} together yields
	\begin{equation}\label{induction-eq1}
	\| \hat{\br}_{k} \|^2_N \leq \| \hat{\br}_{k-1} \|^2_N-
	\frac{1}{4} \Big( \| \hat{\br}_{k-1} \|_N^2	-
	 \Big\| \mathbf{Y} -  \sum_{k=1}^\infty \bX_{G_{k}}'\beta_k \Big\|_{N}^2 \Big)^2
	 \Big[ \sum_{j=1}^\infty | \beta_j |_2 \Big]^{-2}.
	\end{equation}
	As in the proof of Theorem~2.3 in \citet{barron2008kk},
	let $a_k :=  \| \hat{\br}_{k} \|_N^2	-
	 \| \mathbf{Y} -  \sum_{k=1}^\infty \bX_{G_{k}}'\beta_k \|_{N}^2$.
	Then  \eqref{induction-eq1} can be rewritten as
	\begin{equation}\label{induction-eq2}
	a_k \leq a_{k-1} \Big( 1 - 	\frac{a_{k-1} }{4}  \Big[ \sum_{j=1}^\infty | \beta_j |_2 \Big]^{-2} \Big).
	\end{equation}
	Then the induction method used in  the proof of Theorem~2.1 in \cite{barron2008kk}
	gives the desired result, provided that
	$a_1 \leq 4 [ \sum_{j=1}^\infty | \beta_j |_2 ]^{2}$.
	As discussed at the end of the proof of Theorem~2.3 in \citet{barron2008kk}, the initial
condition is satisfied if  $a_0 \leq 4 [ \sum_{j=1}^\infty | \beta_j |_2 ]^{2}$. If not, we have
that $a_0 > 4 [ \sum_{j=1}^\infty | \beta_j |_2 ]^{2}$, which implies that $a_1 < 0$ by
\eqref{induction-eq2}. Hence, in this case, we have that
	$	\| \hat{\br}_{1} \|_N^2 	\leq
	 \| \mathbf{Y} -  \sum_{k=1}^\infty \bX_{G_{k}}'\beta_k \|_{N}^2$
	 for which there is nothing else to prove.

Then, we have proved that the error of the group OGA satisfies
\begin{equation*}
\| \hat{\br}_{m} \|^2_N \leq
\Big\| \mathbf{Y} -  \sum_{k=1}^p \bX_{G_{k}}'\beta_k \Big\|^2_N +
\frac{4}{m}  \Big[ \sum_{j=1}^p | \beta_j |_2 \Big]^{2}, \; m=1,2,\ldots.
\end{equation*}
Equivalently, we have, for any $n\in\mathcal{N}$ and any $k \geq 1$,
	$$\| \bY - \hat{\mathbf{f}}_{n,k}\|_N^2 - \| \bY - {\mathbf{f}}\|_N^2 \leq \frac{4
\|f\|^2_{\mL_1^N}}{k}.$$ %
Because $\mathcal{N}$ is a finite set, the desired result immediately follows by substituting
in the definition of $\hat{\mathbf{f}}$ and $k_{\hat{n}}$.
\end{proof}

\section*{Appendix B: Cost Functions Used in Section \ref{sec:app}}\label{sec:cost:detail}
\renewcommand{\thesection}{B}
\renewcommand{\theequation}{B.\arabic{equation}}
      \setcounter{equation}{0}

In this appendix, we provide detailed descriptions of the cost functions used in Section
\ref{sec:app}.

\subsection*{Calibration of the Cost Function in Section \ref{sec: daycare}}\label{sec:cost:detail:daycare}

Here, we give a detailed description of components of the cost function used in Section
\ref{sec: daycare}.

\begin{itemize}
	\item \textbf{Administration costs.}
		The administration costs in the survey were R\$10,000 and the average survey took
two hours per household to conduct (i.e., $T(S)=120$ measured in minutes). Therefore,
		\[
			c_{\rm admin}(S,n) = \phi ( 120) ^{\alpha}=\hbox{10,000}.
		\]
		If we assume that, say, $\alpha=0.4$ (which means that the costs of 60
		minutes are about 75.8 percent of the costs of 120 minutes), we obtain $\phi \approx$ 1,473.

	\item \textbf{Training costs.} The training costs in the survey were R\$25,000, that is,
		$$c_{\rm train}(S,n) = \kappa(1,466) \cdot 120 = \hbox{25,000},$$
		so that $\kappa(1,466)  \approx 208$. It is reasonable to assume that there exists
some lumpiness in the training costs. For example, there could be some indivisibility in
hotel rooms that are rented, and in the number of trainers required for each training
session. To reflect this lumpiness, we assume that
		\[
		\kappa( n) =\left\{
		\begin{array}{ccc}
		150 & \text{if} & 0<n\leq \hbox{1,400} \\
		208 & \text{if} & \hbox{1,400}<n\leq \hbox{3,000} \\
		250 & \text{if} & \hbox{3,000}<n\leq \hbox{4,500} \\
		300 & \text{if} & \hbox{4,500}<n\leq \hbox{6,000} \\
		350 & \text{if} & \hbox{6,000}<n%
		\end{array}%
		\right..
		\]%
Note that, in this specification, $\kappa( \hbox{1,466})  \approx 208$, as calculated
above. We take this as a point of departure to calibrate $\kappa( n)$. Increases in sample
size $n$ are likely to translated into increases in the required number of field workers for
the survey, which in turn lead to higher training costs. Our experience in the field (based
on running surveys in different settings, and on looking at different budgets for different
versions of this same survey) suggests that, in our example, there is some concavity in this
cost function, because an increase in the sample size, in principle, will not require a
proportional increase in the number of interviewers, and an increase in the number of
interviewers will probably require a less than proportion increase in training costs. For
example, we assume that a large increase in the size of the sample, from 1,500 to 6,000,
leads to an increase in  $\kappa( n)$ from 208 to 300 (i.e., an increase in overall training
costs of about 50 percent).

	\item \textbf{Interview costs.}
		Interview costs were R\$630,000, accounting for the majority of the total survey costs, that is,
		$$c_{\rm interv}(S,n) = \hbox{1,466} \cdot \eta + \hbox{1,466} \cdot p\cdot 120 = \hbox{630,000}, $$
so that $\eta +120p \approx 429.74.$  The costs of traveling to each household in this
survey were approximately half of the total costs of each interview. If we choose $\eta
=200$, then the fixed costs $\eta $ amount to about 47 percent of the total interview costs,
which is consistent with the actual costs of the survey. Then we obtain the price per unit
of survey time as $p\approx 1.91$. It is also reasonable to assume that half of the variable
costs per individual are due to the collection of the three outcomes in the survey, because
their administration was quite lengthy. The costs of collecting the outcomes could also be
seen as fixed costs (equal to $0.955\times 120=114.6$), which means that the price per
unit of survey time for each of the remaining covariates is about $0.955$. In sum, we can
rewrite interview costs as
		$$c_{\rm interv}(S,n) = \hbox{1,466} \times (200 + 114.6) + 1,466 \times 0.955\times 120 = \hbox{630,000}. $$

	\item \textbf{Price per covariate.} We treat the sample obtained from the original
experiment as $\mathcal{S}_{\rm pre}$, a pilot study or the first wave of a data
collection process, based on which we want to decide which covariates and what sample
size to collect in the next wave. We perform the selection procedure for each outcome
variable separately, and thus adjust $T(S) = \tau (1+\sum_{j=1}^M S_j)$. For simplicity,
we assume that to ask each question on the questionnaire takes the same time, so that
$\tau_0 = \tau_j = \tau$ for every question; therefore, $T(S)= \tau
(1+\sum_{j=1}^{M}S_{j}) = 120$. Note that we set $\tau_0 = \tau$ here, but the high
costs of collecting the outcome variables are reflected in the specification of $\eta$ above. 		
This results in $\tau = 120/(1+\sum_{j=1}^{M}S_{j})$. The actual number of covariates
collected in the experiment was $40$; so $\sum_{j=1}^{M}S_{j} = 40$, and thus $\tau
\approx 3$.

\item \textbf{Rescaled budget.} Because we use only a subsample of the original
    experimental sample, we also scale down the original budget of R\$665,000 down to
    R\$569,074, which corresponds to the costs of selecting all 36 covariates in the
    subsample; that is, $c(\mathbf{1}, \hbox{1,330})$ where $\mathbf{1}$ is a
    36-dimensional vector of ones and $c(S,n)$ is the calibrated cost function.

\end{itemize}

\subsection*{Calibration of the Cost Function in Section \ref{sec: school grants}}\label{sec:cost:detail:school grants}

Here, we present a detailed description of components of the cost function used in Section
\ref{sec: school grants}.

\begin{itemize}
	\item \textbf{Administration costs.} The administration costs for the low- and high-cost
covariates were estimated to be about \$5,000 and \$24,000, respectively. The high-cost
covariates were four tests that took about 15 minutes each (i.e., $T_{\rm high}(S) = 60$).
For the low-cost covariates (teacher and principal survey), the total survey time was
around 60 minutes, so $T_{\rm low}(S) = 60$. High- and low-cost variables were
collected by two different sets of enumerators, with different levels of training and skills.
Therefore,
		\[
			\phi_{\rm low} ( 60) ^{\alpha_{\rm low}} = \hbox{5,000}\qquad\text{and}\qquad
\phi_{\rm high} ( 60) ^{\alpha_{\rm high}} = \hbox{24,000}.
		\]
If we assume that, say, $\alpha_{\rm low}=\alpha_{\rm high}=0.7$, we obtain $\phi_{\rm
low} \approx 285$ and $\phi_{\rm high} \approx$ 1,366.

	\item \textbf{Training costs.} $\mu_{\rm high}$ and $\mu_{\rm low}$ are the numbers
of enumerators collecting high- and low-cost variables, respectively. The training costs for
enumerators in the high and low groups increase by $20$ for each set of additional $20$
low-cost enumerators, and by $12$ for each set of $4$ high-cost enumerators:
		$$\kappa_{\rm low}(c,n_c):= 20 \sum_{k=1}^{19} k\cdot \ind\{20(k-1) < \mu_{\rm low}(c,n_c) \leq 20k \} $$
		and
		$$\kappa_{\rm high}(c,n_c):= 12 \sum_{k=1}^{17} k\cdot \ind\{4(k-1) < \mu_{\rm high}(c,n_c) \leq 4k \}. $$
This is reasonable because enumerators for low-cost variables can be trained in large
groups (i.e., groups of 20), while enumerators for high-cost variables need to be trained in
small groups (i.e., groups of 4). However, training a larger group demands a larger room,
and, in our experience, more time in the room. The lumpiness comes from the costs of
hotel rooms and the time of the trainers. The numbers 20 and 12 as the average costs of
each cluster of enumerators were chosen based on our experience with this survey (even
if the design of the training and the organization of the survey was not exactly the same as
the stylized version presented here), and reflect both the time of the trainer and the costs
of hotel rooms for each type of enumerators. Because the low-cost variables are
questionnaires administered to principals and teachers, in principle the number of
required enumerators only depends on $c$ (i.e., $\mu_{\rm low}(c,n_c) = \lfloor
\lambda_{\rm low} c \rfloor$). High-cost variables are collected from students, and
therefore the number of required enumerators should depend on $c$ and $n_c$, so
$\mu_{\rm high}(c,n_c) = \lfloor \lambda_{\rm high} c \mu_{n,{\rm high}}(n_c)\rfloor$.
We assume that the latter increases again in steps, in this case of $10$ individuals per
cluster, that is,
		$$\mu_{n,{\rm high}}(n_c):= \sum_{k=1}^7 k\cdot \ind\{10(k-1) < n_c \leq 10 k \}. $$
We let $\lambda_{\rm low}=0.14$ (capturing the idea that one interviewer could do
about seven schools) and $\lambda_{\rm high}=0.019$ (capturing the idea that one
enumerator could perhaps work with about 50 children). The training costs in the survey
were \$1,600 for the low-cost group of covariates and \$1,600 for the high-cost group of
covariates.

	\item \textbf{Interview costs.} We estimate that interview costs in the survey were
\$150,000 and \$10,000 for the high- and low-cost variables, respectively, i.e.
		$$ \psi_{\rm low}(350) \eta_{\rm low} + 350 \cdot p_{\rm low}\cdot 60 = \hbox{10,000} $$
		and
		$$ \psi_{\rm high}(350, 24) \eta_{\rm high} + 350 \cdot 24 p_{\rm high}\cdot 60 = \hbox{150,000}. $$
We set $\psi_{\rm low}(c) = \mu_{\rm low}(c)$ and $\psi_{\rm high}(c,n_c) = \mu_{\rm
high}(c,n_c)$, the number of required enumerators for the two groups, so that $\eta_{\rm
low}$ and $\eta_{\rm high}$ can be interpreted as fixed costs per enumerator. From the
specification of $\mu_{\rm low}(c)$ and $\mu_{\rm high}(c,n_c)$ above, we obtain
$\mu_{\rm low}(350)=50$ and $\mu_{\rm high}(350,24)=20$. The fixed costs in the
survey were about $\psi_{\rm low}(350) \eta_{\rm low}=500$ and $\psi_{\rm
high}(350,24) \eta_{\rm high}=$ 1,000 for low- and high-cost covariates. Therefore,
$\eta_{\rm low} = 500/50=10$ and $\eta_{\rm high} = \hbox{1,000}/20=50$. Finally,
we can solve for the prices $p_{\rm low} = (\hbox{10,000}-500)/(350\times 60) \approx
0.45$ and $p_{\rm high} = (\hbox{150,000}-\hbox{1,000})/(350\times 24\times 60)
\approx 0.3$.

	\item \textbf{Price per covariate.} For simplicity, we assume that to ask each low-cost
question takes the same time, so that $\tau_j = \tau_{\rm low}$ for every low-cost
question (i.e., $j=1,\ldots,M_{\rm low}$), and that each high-cost question takes the
same time (i.e., $\tau_j=\tau_{\rm high}$) for all $j=M_{\rm low}+1,\ldots,M$. The
experimental budget contains funding for the collection of one outcome variable, the
high-cost test results at follow-up, and three high-cost covariates at baseline. We modify
$T_{\rm high}(S)$ accordingly:  $T_{\rm high}(S) = \tau_{\rm high}(1+\sum_{j=M_{\rm
low}+1}^M S_j)=4\tau_{\rm high}$ so that $\tau_{\rm high} = 60/4 = 15$. Similarly,
originally there were $255$ low-cost covariates, which leads to $\tau_{\rm low} =
120/255\approx 0.47$.

\item \textbf{Rescaled budget.} As in the previous subsection, we use only a subsample
    of the original experimental sample. Therefore,  we scale down the original budget  to
    the amount that corresponds to the costs of collecting all covariates used in the
    subsample. As a consequence, the rescaled budget is \$25,338 in the case of baseline
    outcomes and \$33,281 in the case of the follow-up outcomes. 	
	
\end{itemize}

\section*{Appendix C: A Simple Formulation of the Problem}\label{sec:simpleproblem}
\renewcommand{\thesection}{C}
\renewcommand{\theequation}{C.\arabic{equation}}
      \setcounter{equation}{0}

\subsection*{Uniform Cost per Covariate}

Take the following simple example where: (1) all covariates are orthogonal to each other;
(2) all covariates have the same price, and the budget constraint is just $B=nk$, where $n$ is
sample size and $k$ is the number of covariates. Order the covariates by the contribution to
the MSE, so that the problem is to choose the first $k$ covariates (and the corresponding
$n$).

Define $\sigma ^{2}( k) = ({1}/{N}) \sum_{i=1}^{N}( Y_{i}-\gamma_{0,k} ^{\prime }X_{i})
^{2}$, where $\gamma_{0,k}$ is the same as the vector of true coefficients 
$\gamma_0$ except that all coefficients after the $(k+1)$th coefficient are set to be zeros,  and let $MSE( k,n) = ({1}/{n}) \sigma ^{2}( k) $. For the convenience of using simple
calculus, suppose that $k$ is continuous, ignoring that $k$ is a positive integer, and that
$\sigma ^{2}( k)$ is twice continuously differentiable. This would be a reasonable
first-order approximation when there are a large number of covariates, which is our set-up in
the paper. Because we ordered the covariates by the magnitude of their contribution to a
reduction in the MSE, we have $\partial \sigma ^{2}( k)/{\partial k}<0$, and ${\partial^{2}\sigma ^{2}(k)}/{\partial k^{2}}>0$.

The problem we solve in this case is just
\[
\min_{n,k}\frac{1}{n}\sigma ^{2}( k) \quad {\rm s.t.}\quad nk\leq B.
\]%
Assume we have an interior solution and that  $n$ is also continuous. Replace the budget
constraint in the objective function and we obtain
\[
\min_{n,k}\frac{k}{B}\sigma ^{2}( k).
\]%
This means that $k$ is determined by
\[
\sigma ^{2}( k) +k\frac{\partial \sigma ^{2}( k) }{\partial k}=0,
\]%
or%
\begin{align}\label{foc-appendix}
\frac{\sigma ^{2}( k) }{k}+\frac{\partial \sigma ^{2}(k) }{\partial k}=0,
\end{align}
which in this particular case does not depend on $B$.  Then, $n$ is given by the budget
constraint (i.e., $n = B/k$).

Another way to see where this condition comes from is just to start from the budget
constraint. If we want to always satisfy it then, starting from a particular choice of $n$ and
$k$ yields
\[
n \cdot dk+k \cdot dn=0,
\]
or
\[
\frac{dn}{dk}=-\frac{n}{k}.
\]

Now, suppose we want to see what happens when $k$ increases by a small amount. In that
case, keeping $n$ fixed, the objective function falls by
\[
\frac{1}{n}\frac{\partial \sigma ^{2}( k) }{\partial k}dk.
\]
This is the marginal benefit of increasing $k$. However, $n$ cannot stay fixed, and needs to
decrease by $(n/k) dk$ to keep the budget constraint satisfied. This means that the objective
function will increase by
\[
\left( - \frac{1}{n^{2}}\right) \sigma ^{2}\left( k\right) \left( -\frac{n}{k}\right) dk.
\]
This is the marginal cost of increasing $k$.

At the optimum, in an interior solution, marginal costs and marginal benefits need to
balance out, so
\[
\frac{1}{n k} \sigma ^{2}\left( k\right) dk=-\frac{1}{n}\frac{\partial \sigma ^{2}\left( k\right) }{\partial k}%
dk
\]%
or%
\[
\frac{\sigma ^{2}\left( k\right) }{k}+\frac{\partial \sigma ^{2}\left(
k\right) }{\partial k}=0,
\]%
which reproduces \eqref{foc-appendix}.

There are a few things to notice in this simple example.
\begin{enumerate}
\item[(1)] The marginal costs of an increase in $k$ are increasing in $\sigma
^{2}\left( k\right) $. This is because increases in $n$ are more important
role for the MSE when $\sigma ^{2}\left( k\right) $ is large than when it is
small.

\item[(2)] The marginal costs of an increase in $k$ are decreasing in $k$. This is
because when $k$ is large, adding an additional covariate does not cost much
in terms of reductions in $n$.

\item[(3)] A large $n$ affects the costs and benefits of increasing $k$ in similar way. Having a
    large $n$ reduces benefits of additional covariates because it dilutes the decrease in
    $\sigma ^{2}\left( k\right) $. Then, on one hand, it increases costs through the budget
    constraint, as a larger reduction in $n $ is needed to compensate for the same change in
    $k$. However, on the other hand, it reduces costs, because when $n$ is large, a
    particular reduction in $n$ makes much less difference for the MSE than in the case
    where $n$ is small.

\item[(4)] We can rewrite this condition as
\[
\frac{1}{k}+\frac{{\partial \sigma ^{2}(k)}/{\partial k}}{\sigma ^{2}\left( k\right) }=0,
\]%
where the term $({\partial \sigma ^{2}(k)}/{\partial k})/{\sigma ^{2}\left( k\right) }$ is
the percentage change in the unexplained variance from an increase in $k$.
\end{enumerate}

If we combine
\[
\frac{dn}{n}=\frac{dk}{k},
\]%
which comes from  the budget constraint,
and
\[
\frac{1}{MSE\left( n,k\right) }\frac{\partial MSE\left( n,k\right) }{%
\partial n}=-\frac{1}{n},
\]%
we notice that the percentage decrease in $MSE$ from an increase in $n$ is just
$({dn})/{n}$, the percentage change in $n$, which in turn is just equal to $({dk})/{k}$. So
what the condition above says is that we want to equate the percentage change in the
unexplained variance from a change in $k$ to the percentage change in the MSE from the
corresponding change in $n$.

Perhaps even more interesting is to notice that $k$ is the survey cost per individual in this
very simple example. Then this condition says that we want to choose $k$ to equate the
percentage change in the survey costs per individual ($({dk})/{k}$) to the percentage change
in the residual
variance
\[
\frac{{\partial \sigma ^{2}(k)}/{\partial k}}{\sigma ^{2}\left( k\right) }dk.\]
This condition explicitly links the impacts of $k$ on the survey costs and on the reduction
in the MSE.

Adding fixed costs $F$ of visiting each individual is both useful and easy in this very simple
framework. Suppose there are a fixed costs $F$ of going to each individual, so the budget
constraint is $n\left( F+k\right) =B$. Proceeding as above, we can rewrite our problem as
\[
\min_{n,k}\frac{F+k}{B}\sigma ^{2}\left( k\right).
\]%
This means that $k$ is determined by
\[
\sigma ^{2}\left( k\right) +\left( F+k\right) \frac{\partial \sigma^{2}\left( k\right) }{\partial k}=0,
\]%
or%
\begin{eqnarray*}
\frac{1}{F+k}+\frac{{\partial \sigma ^{2}\left( k\right) }/{\partial k}}{\sigma ^{2}\left( k\right) } &=&0.
\end{eqnarray*}
Note that, when there are large fixed costs of visiting each individual, increasing $k$ is not
going to be that costly at the margin. It makes it much easier to pick a positive $k$.
However, other than that, the main lessons (1)--(4) of this simple model remain unchanged.

\subsection*{Variable Cost per Covariate}

If covariates do not have uniform costs, then the problem is much more complicated.
Consider again a simple set-up where all the regressors are orthogonal, and we order them
by their contribution to the MSE. However, suppose that the magnitude of each covariate's
contribution the MSE takes a discrete finite number of values. Let $\mathcal{R}$ denote the set of these discrete values. Let $r$ denote an element of 
$\mathcal{R}$ and $R = |\mathcal{R}|$ (the total number of all elements in $\mathcal{R}$). 
 There are
many potential covariates within each $r$ group, each with a different price $p$. The
support of $p$ could be different for each $r$. So, within each $r$, we will then order
variables by $p$. The problem will be to determine the optimal $k$ for each $r$ group. Let 
$\mathbf{k} \equiv \{ k_{r}: r \in \mathcal{R} \}$.

The problem is
\[
\min_{n,\mathbf{k}}\frac{1}{n}\sigma ^{2}\left(
\mathbf{k}\right) \quad{\rm s.t.}
\sum_{r \in \mathcal{R}} c_r(k_r) \leq B,
\]%
where $c_{r}\left( k_{r}\right) =\sum_{l=1}^{k_{r}}p_{l}$ are the costs of variables of type
$r$ used in the survey. We can also write it as $ c_{r}\left( k_{r}\right) =p_{r}\left(
k_{r}\right) k_{r}$, where $ p_{r}\left( k_{r}\right)
=({\sum_{l=1}^{k_{r}}p_{l}})/{k_{r}}$. Because we order the variables by price (from
low to high), $\partial p_{r}\left( k_{r}\right)/{\partial k_{r}}>0$. Let $\sigma
_{r}^{2}={\partial \sigma ^{2}\left( \mathbf{k}\right)}/{\partial k_{r}}$, which
is a constant (this is what defines a group of variables).

Then, assume we can approximate $p_{l}\left( k_{r}\right) $ by a continuous function and
that we have an interior solution. Then, substituting the budget constraint in the objective
function:
\[
 \min_{n,\mathbf{k}} \frac{1}{B}\left[ \sum_{r \in \mathcal{R}} c_r(k_r) \right] \sigma ^{2}\left(
\mathbf{k}\right).
\]%
From the first-order condition for $k_{r}$,
\[
\frac{\partial c_{r}\left( k_{r}\right) }{\partial k_{r}}\sigma ^{2}\left(
\mathbf{k}\right) +\left[ \sum_{r \in \mathcal{R}} c_r(k_r) \right] \frac{\partial \sigma ^{2}\left(
\mathbf{k}\right) }{\partial k_{r}}=0,
\]%
or%
\[
\frac{{\partial c_{r}\left( k_{r}\right) }/{\partial k_{r}}}{
\sum_{r \in \mathcal{R}} c_r(k_r)   }=-\frac{
{\partial \sigma ^{2}\left( \mathbf{k}\right) }/{\partial k_{r}}}{\sigma ^{2}\left( \mathbf{k}\right) }.
\]
What this says is that, for each $r$, we choose variables up to the point
where the percent marginal contribution of the additional variable to the
residual variance equals the percent marginal contribution of the additional
variable to the costs per interview, just as in the previous subsection.

\section*{Appendix D: Simulations}\label{sec:sim}
\renewcommand{\thesection}{D}
\renewcommand{\theequation}{D.\arabic{equation}}
      \setcounter{equation}{0}
\renewcommand{\thetable}{D.\arabic{table}}
      \setcounter{table}{0}

In this appendix, we study the finite sample behavior of our proposed data collection
procedure, and compare its performance to other variable selection methods. We consider
the linear model from above, $Y=\gamma'X + \varepsilon$, and mimic the data-generating
process in the day-care application of Section~\ref{sec: daycare} with the cognitive test
outcome variable.

First, we use the dataset to regress $Y$ on $X$. Call the regression coefficients
$\hat{\gamma}_{\rm emp}$ and the residual variance $\hat{\sigma}_{\rm emp}^2$. Then,
we regress $Y$ on the treatment indicator to estimate the treatment effect $\hat{\beta}_{\rm
emp}=0.18656$. We use these three estimates to generate Monte Carlo samples as follows.
For the pre-experimental data $\mathcal{S}_{\rm pre}$, we resample $X$ from the
empirical distribution of the $M=36$ covariates in the dataset and generate outcome
variables by $Y=\gamma'X + \varepsilon$, where $\varepsilon \sim N(0,\hat{\sigma}_{\rm
emp}^2)$ and
$$\gamma = \hat{\gamma}_{\rm emp}+ \frac{1}{2} {\rm sign}(\hat{\gamma}_{\rm emp})\kappa\bar{\gamma}. $$
We vary the scaling parameter $\kappa \in \{0, 0.3, 0.7, 1\}$ and
$\bar{\gamma}:=(\bar{\gamma}_1,\ldots,\bar{\gamma}_{36})'$ is specified in three
different fashions, as follows:%
\begin{itemize}
	\item ``lin-sparse'', where the first five coefficients linearly decrease from 3 to 1, and all others are zero, that is,
		$$\bar{\gamma}_k := \left\{\begin{array}{cc} 3-2(k-1)/5,& 1\leq k \leq 5\\ 0,&\text{otherwise}
     \end{array}\right. ;$$
	\item ``lin-exp'', where the first five coefficients linearly decrease from 3 to 1, and the remaining decay exponentially, that is,
		$$\bar{\gamma}_k := \left\{\begin{array}{cc} 3-2(k-1)/5,& 1\leq k \leq 5\\ e^{-k},&k>5  \end{array}\right. ;$$
	\item ``exp'', where exponential decay $\bar{\gamma}_k := 10 e^{-k}$.
\end{itemize}

\begin{figure}[!b]
	\centering
	\includegraphics[width=0.75\textwidth]{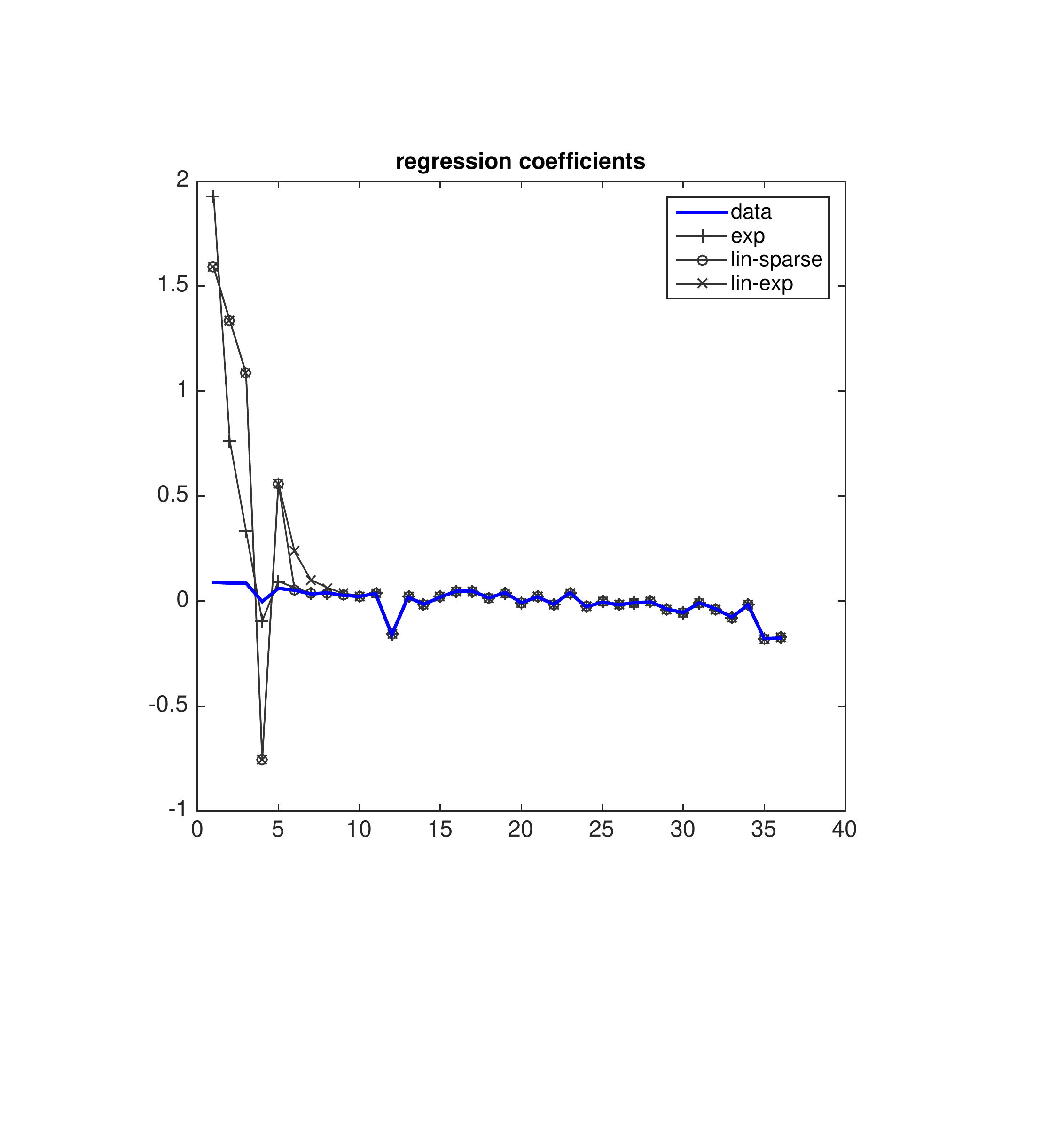}	
	\caption{Regression coefficients in the simulation when $\kappa=0.3$}\label{fig: reg coefficients}
\end{figure}

When $\kappa=0$, the regression coefficients $\gamma$ are equal to those in the empirical
application. When $\kappa>0$, we add one of the three specifications of $\bar{\gamma}$ to
the coefficients found in the dataset, thereby increasing (in absolute value) the first few
coefficients\footnote{Because all estimated coefficients in the dataset ($\hat{\gamma}_{\rm
emp}$) are close to zero and roughly of the same magnitude, we simply pick the first five
covariates that have the highest correlation with the outcome variable.} more than the
others, and thus increasing the importance of the corresponding regressors for prediction of
the outcome. Figure~\ref{fig: reg coefficients} displays the regression coefficients in the
dataset (i.e., when $\kappa=0$, denoted by the blue line labeled ``data''), and $\gamma$ for
the three different specifications when $\kappa=0.3$.

\begin{sidewaystable}[htbp]
\caption{Simulation results: lin-sparse}\label{tab: lin-sparse}
 \begin{center}
\begin{tabular}{llccccccccc}
\hline\hline
Scale & Method     & $\hat{n}$ & $|\hat{I}|$ & Cost/B  & $\sqrt{\widehat{MSE}_{\hat{n},N}(\hat{\mathbf{f}})}$
& $bias(\hat{\beta})$ & $sd(\hat{\beta})$ & RMSE($\hat{\beta}$) & EQB\\
\hline
0     & Experiment & 1,330      & 36          & 1       & 0.02498                                              & $-$0.0034284          & 0.049981          & 0.050048            & \$56,9074\\
      & OGA        & 2,508  & 1.4       & 0.99543 & 0.019249                                             & 0.00034598          & 0.038838          & 0.038801            & \$34,8586\\
      & LASSO      & 2,587  & 0.1       & 0.99278 & 0.019418                                             & 0.0020874           & 0.039372          & 0.039388            & \$35,6781\\
      & POST-LASSO & 2,529  & 1.0        & 0.99457 & 0.019222                                             & 0.00069394          & 0.037758          & 0.037727            & \$35,1659\\
0.3   & Experiment & 1,330      & 36          & 1       & 0.02494                                              & 0.00036992          & 0.049501          & 0.049453            & \$56,9074\\
      & OGA        & 2,350   & 3.9        & 0.99443 & 0.019751                                             & 0.0013905           & 0.038275          & 0.038262            & \$37,7076\\
      & LASSO      & 2,228  & 5.9       & 0.988   & 0.020346                                             & $-$0.00093089         & 0.041713          & 0.041682            & \$39,3017\\
      & POST-LASSO & 2,320  & 4.4       & 0.99321 & 0.019696                                             & $-$0.0020751          & 0.038746          & 0.038763            & \$37,1730\\
0.7   & Experiment & 1,330      & 36          & 1       & 0.024953                                             & $-$0.00086563         & 0.050992          & 0.050948            & \$56,9074\\
      & OGA        & 2,346      & 4.0           & 0.99433 & 0.020552                                             & $-$0.0020151          & 0.041475          & 0.041483            & \$39,7722\\
      & LASSO      & 2,218  & 6.1       & 0.98747 & 0.021145                                             & 0.00057516          & 0.043957          & 0.043917            & \$42,3971\\
      & POST-LASSO & 2,246  & 5.7       & 0.98929 & 0.020177                                             & 0.0019693           & 0.042683          & 0.042686            & \$38,7095\\
1     & Experiment & 1,330      & 36          & 1       & 0.024938                                             & $-$0.0021535          & 0.051146          & 0.05114             & \$56,9074\\
      & OGA        & 2,346      & 4.0           & 0.99433 & 0.021566                                             & 0.00044162          & 0.043389          & 0.043348            & \$43,8536\\
      & LASSO      & 2,172  & 6.9       & 0.98513 & 0.021383                                             & $-$0.00058106         & 0.045378          & 0.045336            & \$43,1589\\
      & POST-LASSO & 2,172  & 6.9       & 0.98513 & 0.019956                                             & $-$0.0048726          & 0.040967          & 0.041215            & \$38,0053\\\hline\hline
\end{tabular}
\end{center}
\end{sidewaystable}

\begin{sidewaystable}[htbp]
\caption{Simulation results: lin-exp}\label{tab: lin-exp}
 \begin{center}
\begin{tabular}{llccccccccc}
\hline\hline
Scale & Method     & $\hat{n}$ & $|\hat{I}|$ & Cost/B  & $\sqrt{\widehat{MSE}_{\hat{n},N}(\hat{\mathbf{f}})}$
& $bias(\hat{\beta})$ & $sd(\hat{\beta})$ & RMSE($\hat{\beta}$) & EQB\\
\hline
0     & Experiment & 1,330      & 36          & 1       & 0.024965                                             & 0.0027033           & 0.051564          & 0.051583            & \$569,074\\
      & OGA        & 2,509      & 1.3       & 0.99541 & 0.019249                                             & $-$0.00042961         & 0.03723           & 0.037195            & \$348,682\\
      & LASSO      & 2,588      & 0.1       & 0.99275 & 0.01941                                              & $-$0.003374           & 0.03845           & 0.03856             & \$357,261\\
      & POST-LASSO & 2,530      & 1.0           & 0.9946  & 0.019215                                             & 0.00076956          & 0.037924          & 0.037894            & \$351,755\\
0.3   & Experiment & 1,330      & 36          & 1       & 0.02492                                              & $-$0.0015645          & 0.049457          & 0.049432            & \$569,074\\
      & OGA        & 2,343      & 4.0       & 0.99421 & 0.019868                                             & $-$0.0014349          & 0.040197          & 0.040182            & \$379,540\\
      & LASSO      & 2,186      & 6.7        & 0.98569 & 0.020652                                             & 0.0019377           & 0.04084           & 0.040845            & \$403,004\\
      & POST-LASSO & 2,313      & 4.5       & 0.99288 & 0.019816                                             & $-$0.0025812          & 0.039587          & 0.039631            & \$377,876\\
0.7   & Experiment & 1,330      & 36          & 1       & 0.024936                                             & 0.0041527           & 0.050436          & 0.050556            & \$569,074\\
      & OGA        & 2,301      & 4.7       & 0.99247 & 0.020805                                             & $-$0.0017267          & 0.041303          & 0.041297            & \$408,990\\
      & LASSO      & 2,134      & 7.7       & 0.98551 & 0.02162                                              & $-$0.00071182         & 0.042716          & 0.042679            & \$440,232\\
      & POST-LASSO & 2,206      & 6.5       & 0.98955 & 0.020522                                             & 0.0013055           & 0.043358          & 0.043334            & \$400,219\\
1     & Experiment & 1,330      & 36          & 1       & 0.024964                                             & $-$0.0034064          & 0.049484          & 0.049551            & \$569,074\\
      & OGA        & 2,286      & 5.0        & 0.99187 & 0.021874                                             & $-$0.0025106          & 0.042304          & 0.042336            & \$451,756\\
      & LASSO      & 2,080      & 9.0       & 0.98793 & 0.021987                                             & $-$0.0015746          & 0.044218          & 0.044201            & \$454,765\\
      & POST-LASSO & 2,078      & 9.0       & 0.98787 & 0.020374                                             & 0.00077488          & 0.041977          & 0.041942            & \$396,218\\\hline\hline
\end{tabular}
\end{center}
\end{sidewaystable}

\begin{sidewaystable}[htbp]
\caption{Simulation results: exp}\label{tab: exp}
 \begin{center}
\begin{tabular}{llccccccccc}
\hline\hline
Scale & Method     & $\hat{n}$ & $|\hat{I}|$ & Cost/B  & $\sqrt{\widehat{MSE}_{\hat{n},N}(\hat{\mathbf{f}})}$
 & $bias(\hat{\beta})$ & $sd(\hat{\beta})$ & RMSE($\hat{\beta}$) & EQB\\
 \hline
0     & Experiment & 1,330      & 36          & 1       & 0.024953                                             & 0.00083077          & 0.054043          & 0.053996            & \$569,074\\
      & OGA        & 2,511      & 1.3         & 0.99538 & 0.019234                                             & 0.0016616           & 0.037237          & 0.037236            & \$348,426\\
      & LASSO      & 2,588      & 0.1         & 0.99278 & 0.019394                                             & $-$0.00049328         & 0.038849          & 0.038813            & \$356,941\\
      & POST-LASSO & 2,529      & 1.0         & 0.99452 & 0.019203                                             & $-$0.00044404         & 0.039549          & 0.039512            & \$351,403\\
0.3   & Experiment & 1,330      & 36          & 1       & 0.024947                                             & $-$0.00089522         & 0.051246          & 0.051202            & \$569,074\\
      & OGA        & 2,411      & 2.9         & 0.99605 & 0.019426                                             & 0.0016951           & 0.038729          & 0.038727            & \$359,950\\
      & LASSO      & 2,291      & 4.9         & 0.9911  & 0.020184                                             & $-$0.0022094          & 0.040243          & 0.040263            & \$389,560\\
      & POST-LASSO & 2,380      & 3.5         & 0.99514 & 0.019377                                             & 0.0014552           & 0.039996          & 0.039982            & \$359,662\\
0.7   & Experiment & 1,330      & 36          & 1       & 0.024946                                             & $-$0.0012694          & 0.050947          & 0.050912            & \$569,074\\
      & OGA        & 2,408      & 3.0         & 0.99605 & 0.019457                                             & 0.0015399           & 0.040789          & 0.040778            & \$362,287\\
      & LASSO      & 2,279      & 5.1         & 0.99039 & 0.020233                                             & 0.0011166           & 0.042491          & 0.042463            & \$391,128\\
      & POST-LASSO & 2,376      & 3.5         & 0.99515 & 0.019405                                             & $-$0.0023208          & 0.037252          & 0.037287            & \$361,903\\
1     & Experiment & 1,330      & 36          & 1       & 0.024948                                             & $-$0.0034014          & 0.051898          & 0.051957            & \$569,074\\
      & OGA        & 2,407      & 3.0         & 0.99603 & 0.019494                                             & 0.0022031           & 0.038846          & 0.038869            & \$364,015\\
      & LASSO      & 2,271      & 5.2         & 0.99008 & 0.020298                                             & 0.0016393           & 0.039024          & 0.039019            & \$392,857\\
      & POST-LASSO & 2,377      & 3.5         & 0.99516 & 0.019448                                             & $-$0.00085645         & 0.039135          & 0.039106            & \$363,023\\\hline\hline
\end{tabular}
\end{center}
\end{sidewaystable}

For each Monte Carlo sample from $\mathcal{S}_{\rm pre}$, we apply the OGA, LASSO,
and POST-LASSO methods, as explained in Section \ref{sec: daycare}.  The cost function
and budget are specified exactly as in the empirical application. We store the sample size
and covariate selection produced by each of the three procedures, and then mimic the
randomized experiment $\mathcal{S}_{\exp}$ by first 
drawing a new sample of $X$
from the same data-generating process as in $\mathcal{S}_{\rm pre}$. 
Then
we  generate random treatment indicators $D$, so that outcomes are determined by
$$Y = \hat{\beta}_{\rm emp} D + \gamma' X + \varepsilon,$$
where $\varepsilon$ is randomly drawn from $N(0,\hat{\sigma}_{\rm
emp}^2)$. 
 We then compute the treatment effect estimator $\hat{\beta}$ of $\beta$ as described in
Step 4 of Section~\ref{sec:problem}.

The results are based on $500$ Monte Carlo samples, $N=$ 1,330, which is the sample size
in the dataset, and $\mathcal{N}$ a fine grid from $500$ to 4,000. All covariates, those in
the dataset as well as the simulated ones, are studentized so that their variance is equal to
one.

For the different specifications of $\bar{\gamma}$, Tables~\ref{tab: lin-sparse}--\ref{tab:
exp} report the selected sample size ($\hat{n}$), the selected number of covariates
($|\hat{I}|$), the ratio of costs for that selection divided by the budget $B$, the square root of
the estimated MSE, $\sqrt{\widehat{MSE}_{\hat{n},N}(\hat{\mathbf{f}})}$, the bias and
standard deviation of the estimated average treatment effect ($bias(\hat{\beta})$ and
$sd(\hat{\beta})$), and the RMSE of $\hat{\beta}$ across the Monte Carlo samples of the
experiment.

Overall, all three methods perform similarly well across different designs and the number of
selected covariates tends to increase as $\kappa$ becomes large. No single method
dominates other methods, although POST-LASSO seems to perform slightly better than
LASSO. In view of the Monte Carlo results, we argue that the empirical findings reported in
Section \ref{sec: daycare} are likely to result from the lack of highly predictive covariates
in the empirical example.

\clearpage
\section*{Appendix E: Variables Selected in the School Grants Example}\label{sec:variabledesc}
\renewcommand{\thesection}{E}

\renewcommand{\thetable}{E.\arabic{table}}
      \setcounter{table}{0}

\begin{table}[!ht]
\caption{School grants (outcome: math test): selected covariates in panel (a) of
Table \ref{tab: schoolgrants testm summary}} \label{tab: schoolgrants testm varlist}
 \begin{center}
\begin{tabular}{llll}
\hline\hline
 OGA                          & LASSO                       & POST-LASSO\\
 \hline \\
 \text{Child is male}         & \text{Child is male}          & \text{Child is male}\\
 \text{Village pop.}          & \text{Dist. to Dakar}         & \text{Dist. to Dakar}\\
 \text{Piped water}           & \text{Dist. to city}          & \text{Dist. to city}\\
 \text{Teach-stud}            & \text{Village pop.}           & \text{Village pop.}\\
 \text{No. computers}         & \text{Piped water}            & \text{Piped water}\\
 \text{Req. (h) teach. qual.} & \text{No. computers}          & \text{No. computers}\\
 \text{Req. (h) teach. att.}  & \text{Req. (h) teach-stud}    & \text{Req. (h) teach-stud}\\
 \text{Obs. (h) manuals}      & \text{Hrs. tutoring}          & \text{Hrs. tutoring}\\
 \text{Books acq. last yr.}   & \text{Books acq. last yr.}    & \text{Books acq. last yr.}\\
 \text{Any parent transfer}   & \text{Provis. struct.}        & \text{Provis. struct.}\\
 \text{Teacher bacc. plus}    & \text{NGO cash cont.}         & \text{NGO cash cont.}\\
 \text{Teach. train. math}    & \text{Any parent transfer}    & \text{Any parent transfer}\\
 \text{Obst. (t) class size}  & \text{NGO promised cash}      & \text{NGO promised cash}\\
 \text{Measure. equip.}       & \text{Avg. teach. exp.}       & \text{Avg. teach. exp.}\\
 \text{ }                     & \text{Teacher bacc. plus}     & \text{Teacher bacc. plus}\\
 \text{ }                     & \text{Obs. (t) student will.} & \text{Obs. (t) student will.}\\
 \text{ }                     & \text{Obst. (t) class size}   & \text{Obst. (t) class size}\\
 \text{ }                     & \text{Silence kids}           & \text{Silence kids}
 \\ \hline\hline
\end{tabular}
\end{center}
\end{table}

\begin{table}[htbp]
\caption{Definition of variables in Table \ref{tab: schoolgrants testm varlist}}
 \begin{center}
\begin{tabular}{ll}
\hline
Variable               & Definition \\
\hline
Child is male          & Male student \\
Village pop.           & Size of the population in the village \\
Piped water            & School has access to piped water \\
Teach--stud            & Teacher--student ratio in the school \\
No. computers          & Number of computers in the school \\
Req. (h)\ teach. qual. & Principal believes teacher quality is a major requirement \\
                       & for school success \\
Req. (h) teach. att.   & Principal believes teacher attendance is a major requirement \\
                       & for school success \\
Obs. (h) manuals       & Principal believes the lack of manuals is a major obstacle\\
						& to school success \\
Books acq. last yr.    & Number of manuals acquired last year \\
Any parent transfer    & Cash contributions from parents \\
Teacher bacc. plus     & Teacher has at least a baccalaureate degree \\
Teach. train. math     & Teacher received special training in math \\
Obst. (t) class size   & Teacher believes class size is a major obstacle to school success \\
Measure. equip.        & There is measurement equipment in the classroom \\
Dist. to Dakar         & Distance to Dakar \\
Dist. to city          & Distance to the nearest city \\
Req. (h) teach--stud   & Principal believe teacher--student ratio is a major requirement \\
                       & for school success \\
Hrs. tutoring          & Hours of tutoring by teachers \\
Provis. struct.        & Number of provisional structures in school \\
NGO cash cont.         & Cash contributions by NGO \\
NGO promised cash      & Promised cash contributions by NGO \\
Avg. teach. exp.       & Average experience of teachers in the school \\
Obst. (t) student will.       & Teacher believes the lack of student willpower is one of the  \\
                       & main obstacles to learning in the school \\
Obst. (t) class size & Teacher believes the lack of classroom size is one of the main\\
						& obstacles to learning in the school\\
Silence kids           & Teacher has to silence kids frequently \\
\hline
\end{tabular}
\end{center}
\end{table}

\clearpage
\section*{Appendix F: Out-of-Sample Evaluations}\label{sec:oos eval}

In the empirical applications, we performed the covariate selection procedure as well as its evaluation (by RMSE and EQB) on the same pre-experimental sample. In this section, we study the sensitivity of our findings when the covariate selection and evaluation steps are performed on two separate samples.

We partition each of the datasets into five subsamples of equal size. Four of the five subsamples are merged to form the training set on which we perform the covariate selection procedure, and the remaining fifth subsample serves as evaluation sample on which we calculate the performance measures RMSE and EQB. Given the partition into five subsamples, there are five possible ways to combine them into training and evaluation samples. We perform the covariate selection on each of these five training samples using the same calibrated cost functions as in the main text, but adjusting the budget for the drop in sample size by letting the budget be the cost function $c(S,n)$ evaluated at the experimental selection $S=(1,\ldots,1)'$ and $n$ the length of the training sample. The output of the procedure consists of five sample size selections $\hat{n}$, five covariate selections, i.e. five values of $|\hat{I}|$, and five cost-to-budget ratios. Tables~\ref{tab: oos childcare cognitive}--\ref{tab: oos schoolgrants} show the averages of $\hat{n}$, $|\hat{I}|$, and ``Cost/B'' over those five different training samples. The RMSE is calculated using the estimate of $\gamma$ from the training sample and data on $Y$ and $X$ from the evaluation sample. Similarly, the EQB is the budget necessary to achieve the RMSE on the evaluation sample equal to that of the experiment when the covariate selection procedures are applied to the training sample. Both RMSE and EQB are then averaged over the five subsamples.

Overall, the results of this out-of-sample evaluation exercise are similar to those reported in the full-sample analysis of the main text. Qualitatively, in both applications, the covariate selection procedures recommend larger sample sizes than the experiment. The recommended sample size may differ somewhat from those reported in the main text because the budget and training sample size is different, but the orders of magnitude are the same. In the school grants application, we notice that the recommended number of covariates selected tends to be smaller than in the full-sample evaluation of the main text, but if anything the covariate selection procedures manage to achieve even lower relative equivalent budgets and lower RMSE than the experiment.

\begin{table}[ht]
 \begin{center}
 \caption{Day-care (outcome: cognitive test), 5-fold out-of-sample evaluation}
 \label{tab: oos childcare cognitive}
\begin{tabular}{rrrrrrr}
\hline\hline
method     & $\hat{n}$ & $|\hat{I}|$ & cost/B  & RMSE     & EQB				& relative EQB\\\hline
experiment & 1,330      & 36          & 1       & 0.029068 & R\$460,809.54 & 1\\
OGA        & 2,209      & 0.8         & 0.99503 & 0.020694 & R\$235,654.21 & 0.511\\
LASSO      & 2,260      & 0           & 0.99392 & 0.020777 & R\$237,425.38 & 0.515\\
POST-LASSO & 2,146      & 1.8         & 0.99464 & 0.020647 & R\$234,494.95 & 0.509\\\hline\hline
\end{tabular}
\end{center}

\end{table}

\begin{table}[ht]
 \begin{center}
 \caption{Day-care (outcome: health assessment), 5-fold out-of-sample evaluation}
 \label{tab: oos childcare health}
\begin{tabular}{rrrrrrr}
\hline\hline
method     & $\hat{n}$ & $|\hat{I}|$ & cost/B  & RMSE     & EQB & relative EQB\\\hline
experiment & 1,330      & 36          & 1       & 0.029313 & R\$460,809.54 & 1\\
OGA        & 2,221      & 0.6         & 0.99495 & 0.020708 & R\$232,066.95 & 0.504\\
LASSO      & 2,260      & 0           & 0.99392 & 0.020787 & R\$233,751.11 & 0.507\\
POST-LASSO & 2,158      & 1.6         & 0.9949  & 0.020644 & R\$231,224.87 & 0.502\\\hline\hline
\end{tabular}
\end{center}
\end{table}

\begin{table}[!ht]
\caption{School grants (outcome: math test), 5-fold out-of-sample evaluation}
\label{tab: oos schoolgrants}
 \begin{center}
\begin{tabular}{lcccccc}
\hline\hline
Method     & $\hat{n}$ & $|\hat{I}|$ & Cost/B  & RMSE & EQB        & Relative EQB\\
\hline\\
\multicolumn{7}{c}{(a) Baseline outcome}\\[3pt]
experiment & 1,824   & 142 & 1       & 0.0082721 & \$27,523.74 & 1\\
OGA        & 2,618.4 & 1.2 & 0.99823 & 0.0044229 & \$16,609.53 & 0.603\\
LASSO      & 2,658   & 0   & 0.9991  & 0.004445  & \$16,621.60 & 0.604\\
POST-LASSO & 2,638.2 & 1.2 & 0.99927 & 0.0044291 & \$16,651.77 & 0.605\\ [6pt]

\multicolumn{7}{c}{(b) Follow-up outcome}\\ [3pt]
experiment & 609  & 143 & 1       & 0.0098756 &  \$48,856.20 & 1\\
OGA        & 6,132 & 0   & 0.99885 & 0.0028432 & \$14,664.85 & 0.300\\
LASSO      & 6,132 & 0   & 0.99885 & 0.0028432 & \$14,664.85 & 0.300\\
POST-LASSO & 6,132 & 0   & 0.99885 & 0.0028432 & \$14,664.85 & 0.300\\ [6pt]

\multicolumn{7}{c}{(c) Follow-up outcome, no high-cost covariates}\\[3pt]
experiment & 609    & 143 & 1       & 0.0098756 &  \$48,856.20 & 1\\
OGA        & 6,092.8 & 0.8 & 0.99893 & 0.0027807 & \$14,571.10 & 0.298\\
LASSO      & 6,000.8 & 5.4 & 0.99905 & 0.002795  & \$14,651.75 & 0.300\\
POST-LASSO & 6,040.4 & 2.4 & 0.99887 & 0.0027623 & \$14,532.12 & 0.297\\ [6pt]

\multicolumn{7}{c}{(d) Follow-up outcome, force baseline outcome}\\[3pt]
experiment & 609    & 143 & 1       & 0.0098756 &  \$48,856.20 & 1\\
OGA        & 2,035.2 & 2.4 & 0.90783 & 0.0041647 & \$24,918.89 & 0.510\\
LASSO      & 2,494   & 1   & 0.99623 & 0.0046439 & \$25,522.72 & 0.522\\
POST-LASSO & 2,494   & 1   & 0.99623 & 0.0034893 & \$23,651.72 & 0.484\\
\hline\hline
\end{tabular}
\end{center}
\end{table}

\clearpage
\section*{Appendix G: The Case of Multivariate Outcomes}\label{sec:vector-outcome}
\renewcommand{\thesection}{G}
\renewcommand{\theequation}{G.\arabic{equation}}
      \setcounter{equation}{0}
\renewcommand{\thetable}{G.\arabic{table}}
      \setcounter{table}{0}

In this section, we consider an extension to the case of multivariate outcomes. 
If data on a particular regressor is collected, then the regressor is automatically available for regressions involving 
any of the outcomes. Therefore, it is natural to  select 
one common set of regressors for all outcomes.	Hence, 
our regression problem corresponds to  the special case of seemingly unrelated regressions (SUR) such that   
the vector of regressors is identical for each equation. In this case, it is well known that the OLS and GLS estimators  are algebraically identical. 
In other words, there is no loss of efficiency in using the single-equation OLS estimator even if regression errors are correlated.

Suppose there are $L$ outcome variables of interest, say $\{Y_{\ell, i}: \ell=1,\dots,L,   \text{ and }  i=1,\ldots, N \}$. Then 
a multivariate analog of \eqref{eq: sample problem} can be written as 
\begin{equation}\label{eq: sample problem multi}
	\min_{n\in \mathbb{N}_+,\, \bm{\gamma} = (\gamma_1',\ldots,
\gamma_L')' \in \mathbb{R}^{ML}} \frac{1}{nNL}\sum_{\ell =1 }^L \sum_{i=1}^N (Y_{\ell, i}-\gamma_\ell'X_i)^2\qquad \text{s.t.}\qquad  c(\mathcal{I}(\gamma),n) \leq B.
\end{equation}
In other words, the stacked version of the OLS problem is equivalent to regressing
$\bm{y} := (\bm{y}_1',\ldots,\bm{y}_L')'$ on $I_L \otimes \bm{X}$ conditional on the budget constraint, where 
$\bm{y}_\ell = (Y_{\ell,1},\ldots,Y_{\ell,L})'$, $I_L$ is the $L$-dimensional identity matrix, and  $\bm{X}$ is $N \times M$ dimensional matrix whose $i$th row is $X_i'$. 
Therefore, the OGA applies to this case as well with minor modifications. First, we need to redefine the outcome vector and the design matrix with the stacked  $\bm{y}$ and the enlarged design matrix $I_L \otimes \bm{X}$. Suppose that a variable selection problem is on individual components of $X_i$. Then note that because of the nature of the stacked regressions, we need to apply a group OGA with each group  consisting of $L$ columns of $[I_L \otimes \bm{X}]_k$, where $k = (\ell-1)M + m$ 
$(\ell = 1,\ldots,L)$ for each $m =1,\ldots,M$.

\end{document}